\documentclass[12pt]{article}

\usepackage{latexsym,amsmath,amssymb,amsthm,amsfonts,graphicx,natbib,breqn,epsfig,bm,authblk,url}

\linespread{1.3}

\usepackage{float}
\usepackage{booktabs} 
\usepackage[title]{appendix} 
\usepackage{graphicx} 
\usepackage[margin=1cm]{caption} 
\usepackage{xr} 
\usepackage{enumitem} 
\usepackage{mathtools} 
\newcommand*{\patchAmsMathEnvironmentForLineno}[1]{%
      \expandafter\let\csname old#1\expandafter\endcsname\csname #1\endcsname
      \expandafter\let\csname oldend#1\expandafter\endcsname\csname end#1\endcsname
      \renewenvironment{#1}%
         {\linenomath\csname old#1\endcsname}%
         {\csname oldend#1\endcsname\endlinenomath}}%
    \newcommand*{\patchBothAmsMathEnvironmentsForLineno}[1]{%
      \patchAmsMathEnvironmentForLineno{#1}%
      \patchAmsMathEnvironmentForLineno{#1*}}%
    \AtBeginDocument{%
    \patchBothAmsMathEnvironmentsForLineno{equation}%
    \patchBothAmsMathEnvironmentsForLineno{align}%
    \patchBothAmsMathEnvironmentsForLineno{flalign}%
    \patchBothAmsMathEnvironmentsForLineno{alignat}%
    \patchBothAmsMathEnvironmentsForLineno{gather}%
    \patchBothAmsMathEnvironmentsForLineno{multline}%
    }
\usepackage[mathlines,displaymath]{lineno}

\floatstyle{plain}
\usepackage[ruled]{algorithm2e}
\usepackage{algpseudocode}%
\newtheorem{lemma}{Lemma} 
\newtheorem{definition}{Definition} 
\newtheorem{remark}{Remark} 

\addtolength{\textwidth}{1.2in}
\addtolength{\oddsidemargin}{-0.5in}
\addtolength{\textheight}{1.6in}
\addtolength{\topmargin}{-0.8in}

\newcommand{\Eqref}[1]{Eq. \eqref{#1}}
\def\vech{\rm {vech}}

\def\vec{{\rm vec}}
\def\wt{ \widetilde }



\begin{document}

\title{Gaussian variational approximations for high-dimensional state space models}
\date{\empty}

\author[1]{Matias Quiroz}
\author[3,4]{David J. Nott}
\author[5]{Robert Kohn}
\affil[1]{University of Technology Sydney, School of Mathematical and Physical Sciences, Sydney NSW 2007, Australia.}
\affil[2]{Department of Statistics and Applied Probability, National University of Singapore, Singapore 117546.}
\affil[3]{Institute of Operations Research and Analytics, National University of Singapore, 21 Lower Kent Ridge Road,
Singapore 119077}\vspace{-1in}
\affil[4]{UNSW Business School, School of Economics, University of New South Wales, Sydney NSW 2052, Australia.}

\maketitle
\vspace{-0.3in}
\begin{abstract}
Our article considers a Gaussian variational approximation of the posterior density
 in a high-dimensional state space model. The variational parameters to be optimized
are the mean vector and the  covariance matrix of the approximation.
The number of parameters in the covariance matrix grows as the square of the number of model parameters,
so it is necessary to find simple yet effective parameterizations of the covariance structure when the
number of model parameters is large. We approximate the joint posterior distribution
over the high-dimensional state vectors by a dynamic factor model, having Markovian time dependence
and a factor covariance structure for the states. This gives a reduced description of the dependence structure for the
states, as well as a temporal conditional independence structure similar to that in the true posterior.
The usefulness of the approach is illustrated for prediction in two high-dimensional applications that are challenging for
Markov chain Monte Carlo sampling. The first is a spatio-temporal model for the spread of the Eurasian Collared-Dove across North America;
the second is a Wishart-based multivariate stochastic volatility model for financial returns.
\\\noindent \textbf{Keywords.}  Dynamic factor, Stochastic gradient, Spatio-temporal modeling.

\end{abstract}

\section{Introduction}\label{sec:Intro}
Variational Approximation  (VA) \citep{Ormerod2010,blei+kj17} estimates the posterior distribution
of a model by  assuming the form for the posterior density and optimizing a measure of closeness
to the true posterior; e.g., a frequent choice for the approximation is a
multivariate Gaussian distribution, where the variational optimization is over an unknown mean and covariance matrix.
VA is becoming an increasingly popular way to estimate the posterior because of its
ability to handle large datasets and highly parameterized models. The
accuracy of the VA depends on a number of factors, such as
the flexibility of the approximating family, the model considered, and the sample size.
There are now some theoretical results which show that
the variational posterior converges to the true parameter value under suitable regularity conditions, and rates of convergence
have been established for parametric models \citep{Wang2018VBConsistency} and, more generally,
for non-parametric and high-dimensional models \citep{zhang2018convergence}.
However, for a finite number of observations, when the variational approximation does not collapse to a point mass,
it is often observed that there is a practically meaningful discrepancy between the uncertainty quantification
provided by the approximation and that of the true posterior distribution.
This is especially the case when the variational family used is insufficiently flexible.
Nevertheless, even in these cases, predictive inference -- predictions and prediction intervals -- obtained from VA seem
empirically to be usefully close to those obtained from the exact posterior.
As such, variational approximation methods provide a useful and fast alternative to Markov chain Monte Carlo (MCMC),
especially when predictive inference is the focus of the analysis.

\medskip

Our article considers a  Gaussian variational approximation (GVA) for a state space model when the state vector is high-dimensional.
Such models are common in spatio-temporal applications~\citep{cressie+w11}, financial econometrics~\citep{philipov+g06}, and in other important applications.
It is challenging to obtain  the GVA  when dealing with a high-dimensional model, because the number of variational parameters
in the covariance matrix of the VA grows quadratically with the number of model parameters. This makes
it necessary to parameterize the variational covariance matrix parsimoniously, but still be
able to capture the structure of the posterior.
This goal is best achieved by taking into account the structure of the posterior itself.  We do so by parameterizing
the variational posterior covariance matrix using a dynamic factor model, which reduces the dimension of the state vector.
The Markovian time dependence for the low-dimensional factors provides sparsity in the precision matrix for the factors.

\medskip

We develop efficient computational methods for forming the approximations and illustrate the advantages of the
approach in two high-dimensional example datasets. For both models, Bayesian inference by
 MCMC simulation is challenging.  The first is a spatio-temporal model for the spread of the Eurasian collared dove across North America \citep{wikle+h06};
the second is a multivariate stochastic volatility model for a collection of portfolios of assets \citep{philipov+g06}.
We derive GVAs for both models and show that they give useful predictive inference.

\medskip

VA estimates the posterior by optimization. Our article uses stochastic
gradient ascent methods to performing the optimization \citep{ji+sw10,nott+tvk12,Paisley2012,Salimans2013}.
In particular, the so-called reparameterization trick is used to unbiasedly  estimate the gradients of the variational objective
\citep{Kingma2013,rezende+mw14}. Section \ref{sec:stochasticgradientvariational} briefly reviews  these methods.
Applying these methods for GVA, \citet{Tan2016}  match the sparsity of the variational
precision matrix to the conditional independence structure of the true posterior based on a sparse Cholesky factor
for the precision matrix. Their motivation
is that zeros in the precision matrix of a Gaussian distribution correspond to conditional
independence between variables, and sparse matrix operations allow computations in the variational optimization to be done
efficiently. They apply their approach to both random effects models and state space models,
but their method is impractical for a state space model having a  high-dimensional state vector.
This approach and related approximations using a sparse precision matrix are reviewed further in Section~3.

\medskip

Our approach is also related to recent high-dimensional Gaussian posterior approximations having a factor structure.
Factor models \citep{Bartholomew2011} are well known to be useful for modeling dependence in high-dimensional settings.
\citet{ong+ns17} consider a Gaussian variational approximation for factor covariance structures using stochastic
gradient methods for the variational optimization.  Computation in the variational optimization can be done efficiently in
high dimensions using the Woodbury formula \citep{woodbury1950inverting}.
\citet{Barber1998} and \citet{Seeger2000} used factor structures in GVA, but only
when  the variational objective can be computed analytically or with one-dimensional quadrature.
\citet{rezende+mw14} consider a factor model for the precision matrix with one factor in some applications to
some deep generative models arising in machine learning applications.
\citet{Miller2016} considered factor parameterizations of covariance
matrices for normal mixture components
in a flexible variational boosting approximate inference method, including a method for exploiting the reparameterization
trick for unbiased gradient estimation of the variational objective in that setting.

\medskip

Various other parameterizations of the covariance matrix in GVA were considered by
\citet{Opper2009}, \citet{Challis2013} and \citet{Salimans2013}.  \citet{Salimans2013}
also considered efficient stochastic gradient methods for fitting such approximations,
using both gradient and Hessian information and exploiting other structure in the
target posterior distribution, as well as extensions to more complex hierarchical formulations including mixtures of normals.

\medskip
Our article makes use of both the conditional independence structure and the factor structure in forming the GVA for high dimensional state space models.
Bayesian computations for state space models are well known to be challenging for complex nonlinear models. It is usually feasible to carry out MCMC on a
complex state space model by sampling the states one at a time conditional on the neighbouring states
(e.g.,  \citealp{carlin1992monte});  in general, sampling one state at a time requires careful convergence diagnosis
and can fail if the dependence between states is strong. \cite{carter1994gibbs}  document this phenomenon in linear
Gaussian state space models, and we also document this problem of poor mixing for the spatio-temporal case \citep{wikle+h06} discussed later.

\medskip

State-of-the-art general
approaches using particle MCMC methods \citep{Andrieu2010} can in principle be much more efficient than
an MCMC algorithm that generates the states one at a time. However, particle MCMC is usually much slower than
MCMC because of the need to generate multiple particles at each time point.
Particle methods also have a number of other drawbacks, which depend on the model that is estimated. Thus, if
there is strong dependence between the states and parameters, then it is necessary to use pseudo-marginal methods
\citep{Beaumont2003, Andrieu2009} which estimate the likelihood and it is necessary to ensure that the variability
of the log of the estimated likelihood is sufficiently small \citep{pitt2012some, doucet2013}.
This is particularly difficult to do if the state dimension is high.

\medskip

The rest of the article is organized as follows.
Section~\ref{sec:stochasticgradientvariational} provides a brief description of variational approximation.
Section~\ref{sec:CovarianceParameterizations} reviews some previous parameterizations of the covariance matrix,
in particular the methods of \citet{Tan2016} and \citet{ong+ns17}
for GVA using conditional independence and a factor structure, respectively.
Section~\ref{sec:Gaussian_Variational_high_dim_state} describes our methodology,
which combines a factor structure for the states and conditional independence in time for the
factors to obtain flexible and convenient approximations of the posterior distribution
in high dimensional state space models.
Section~\ref{sec:examples} describes an extended example for a spatio-temporal dataset in ecology concerned with the spread of
the Eurasian collared-dove across North America.
Section~\ref{sec:examples2} considers variational inference in a Wishart based multivariate stochastic volatility model.
Appendix \ref{app:grad_expressions_lemmas} contains the necessary gradient expressions to implement our method.
Technical derivations and other details are placed in a self-contained supplement after the main article. We refer to equations, sections, etc in the main paper as (1), Section 1, etc, and in the supplement as (S1), Section S1, etc.

\section{Stochastic gradient variational methods\label{sec:stochasticgradientvariational}}
\subsection{Variational objective function\label{SS: var obj fn}}
Variational approximation methods \citep{Attias1999,Jordan1999,Winn2005} reformulate the problem of approximating an intractable
posterior distribution as an optimization problem.
Let $\theta=(\theta_1,\dots,\theta_d)^\top$ be the vector of model parameters, $y=(y_1,\dots,y_n)^\top$ the observations
and consider Bayesian inference for $\theta$ with a prior density $p(\theta)$.
Denoting the likelihood by $p(y|\theta)$, the posterior density is $p(\theta|y)\propto p(\theta)p(y|\theta)$, and in variational approximation we consider a family of densities $\{ q_\lambda(\theta)\}$, indexed by the variational parameter $\lambda$, to approximate $p(\theta|y)$. Our article takes the approximating family to be Gaussian so that $\lambda$ consists
of the mean vector and the distinct elements of the covariance matrix in the approximating normal density.

\medskip

To express the approximation of $p(\theta|y)$ as an optimization problem, we take the Kullback-Leibler (KL) divergence,
\begin{align*}
 \mathrm{KL}(q_\lambda(\theta)||p(\theta|y)) & = \int \log \frac{q_\lambda(\theta)}{p(\theta|y)}q_\lambda(\theta)\;d\theta,
\end{align*}
as the discrepancy measure between
$q_\lambda(\theta)$ to $p(\theta|y)$. The KL divergence is non-negative and zero if and only if $q_\lambda(\theta)=p(\theta|y)$.
It is straightforward to show that $\log p(y)$, where $p(y)=\int p(\theta)p(y|\theta)\;d\theta$, can be expressed as
\begin{align}
  \log p(y) & = {\cal L}(\lambda)+\mathrm{KL}(q_\lambda(\theta)||p(\theta|y)),  \label{logpy}
\end{align}
where
\begin{align}
  {\cal L}(\lambda) & = \int \log \frac{p(\theta)p(y|\theta)}{q_\lambda(\theta)}q_\lambda(\theta)\;d\theta \label{lb}
\end{align}
is referred to as the variational lower bound or evidence lower bound (ELBO).
We have that $\log p(y)\geq {\cal L}(\lambda)$, with
equality if and only if $q_\lambda(\theta)=p(\theta|y)$ because the KL divergence is non-negative.
\Eqref{logpy} shows that minimizing the KL divergence is equivalent to maximizing the ELBO in \eqref{lb} because
$\log p(y)$ does not depend on $\lambda$;  this is a more convenient optimization target
as it does not involve the intractable $p(y)$.

\medskip

For introductory overviews of variational methods for statisticians
see \citet{Ormerod2010, blei+kj17}.

\subsection{Stochastic gradient optimization\label{SS:Stochastic gradient optimization} }
Maximizing ${\cal L}(\lambda)$ to obtain an optimal approximation of $p(\theta|y)$ is often difficult in models with a non-conjugate prior structure, since ${\cal L}(\lambda)$
is defined as an integral which is generally intractable.  However, stochastic gradient methods
\citep{Robbins1951} are useful for performing the optimization
and there is now a large literature surrounding the application of this idea \citep[among others]{ji+sw10,Paisley2012,nott+tvk12,Salimans2013,Kingma2013,rezende+mw14,Hoffman2013,Ranganath2014,Titsias2015,Kucukelbir2016}.
In a simple stochastic gradient ascent method for optimizing ${\cal L}(\lambda)$, an initial guess for the optimal value $\lambda^{(0)}$ is updated according to the iterative
scheme
\begin{align}
 \lambda^{(t+1)} & =\lambda^{(t)}+a_t \widehat{\nabla_\lambda {\cal L}(\lambda^{(t)})}  \label{robinsm},
\end{align}
where $a_t$, $t\geq 0$ is a sequence of learning rates; $\nabla_\lambda {\cal L}(\lambda)$ is the gradient vector of ${\cal L}(\lambda)$ with respect
to $\lambda$; and $\widehat{\nabla_\lambda {\cal L}(\lambda)}$ denotes an unbiased estimate of $\nabla_\lambda {\cal
L}(\lambda)$.
The learning rate sequence is typically chosen to satisfy $\sum_t a_t=\infty$ and $\sum_t a_t^2<\infty$, which ensures that the
iterates $\lambda^{(t)}$ converge
to a local optimum as $t\rightarrow\infty$ under suitable regularity conditions \citep{bottou10}.
Various adaptive choices for the learning
rates are also possible and we use the ADADELTA \citep{Zeiler2012}
approach in our applications in Sections~\ref{sec:examples} and \ref{sec:examples2}.

\medskip
\subsection{Variance reduction\label{SS: Variance reduction}}
Application of stochastic gradient methods to variational inference depends on being able to obtain the required unbiased
estimates of the gradient of the lower bound in \eqref{robinsm}.
Reducing the variance of these gradient estimates as much as possible is important
for both the stability of the algorithm and fast convergence. Our article uses gradient estimates based on the so-called reparameterization trick
\citep{Kingma2013,rezende+mw14}. The lower bound ${\cal L}(\lambda)$ is an expectation with respect to $q_\lambda$,
\begin{align}
  {\cal L}(\lambda) & = E_q(\log h(\theta)-\log q_\lambda(\theta)), \label{lbq}
\end{align}
where $E_q(\cdot)$ denotes expectation with respect to $q_\lambda$ and $h(\theta)=p(\theta)p(y|\theta)$.  If we differentiate with respect to $\lambda$ under the integral sign in (\ref{lbq}), the
resulting expression for the gradient can also be written as an expectation with respect to $q_\lambda$, which is easily estimated unbiasedly by Monte Carlo integration provided that
sampling from this distribution is feasible. However, this
approach, called the score function method \citep{Williams1992}, typically has a very large variance. The reparameterization trick is often much more efficient \citep{xu2018some} and we now describe it.
Suppose that we can write
$\theta\sim q_\lambda(\theta)$ as $\theta=u(\lambda,\omega)$, where $\omega$ is a random vector with a density
which does not depend on the variational parameters $\lambda$, e.g. for a multivariate normal density $q_\lambda(\theta)=\mathcal{N}(\mu,\Sigma)$, with $\Sigma=CC^\top$, where
$C$ is the (lower triangular) Cholesky factor of $\Sigma$ we can write $\theta=\mu+C\omega$, where $\omega\sim \mathcal{N}(0,I_d)$ and  $I_d$ is the $d\times d$ identity matrix.
Substituting $\theta=u(\lambda,\omega)$ into (\ref{lbq}), we obtain
\begin{align}
  {\cal L}(\lambda) & = E_\omega(\log h(u(\lambda,\omega))-\log q_\lambda(u(\lambda,\omega))), \label{lbqr}
\end{align}
where $E_\omega$ is the expectation with respect to $\omega$.
Differentiating under the integral sign, we obtain
\begin{align}
  \nabla_\lambda {\cal L}(\lambda) & = E_\omega(\nabla_\lambda \log h(u(\lambda,\omega))-\nabla_\lambda \log q_\lambda(u(\lambda,\omega))), \label{lbqr2}
\end{align}
which  is easily estimated unbiasedly if it is possible to sample from $\omega$.

\medskip

We now discuss variance reduction beyond the reparameterization trick. \citet{Roeder2017}, generalizing arguments in \citet{Salimans2013}, \citet{Han2016} and \citet{Tan2016}, show that \eqref{lbqr2} can  be rewritten
as
\begin{align}
  \nabla_\lambda {\cal L}(\lambda) & = E_\omega\left(\frac{d u(\lambda,\omega)}{d\lambda}\left\{\nabla_\theta \log h(u(\lambda,\omega))-\nabla_\theta \log q_\lambda(u(\lambda,\omega))\right\}\right), \label{lbq2}
\end{align}
where $d u(\lambda,\omega)/d\lambda$ is defined as the matrix with element $(i,j)$ the partial derivative of the $i$th element of $u$ with respect to the $j$th element of
$\lambda$.  Note that if the approximation is exact, i.e. $q_\lambda(\theta)\propto h(\theta)$, then a Monte Carlo approximation to the expectation on the right hand
side of (\ref{lbq2}) is exactly zero even if such an approximation is formed using only a single sample from $f(\cdot)$.  This is one reason to prefer (\ref{lbq2})
as the basis for obtaining unbiased estimates of the gradient of the lower bound if the approximating variational family is flexible enough to provide
an accurate approximation.  However, \citet{Roeder2017} show that the extra terms that arise when (\ref{lbqr2}) is used directly for estimating the gradient
of the lower bound can be thought of as acting as a control variate, i.e. it reduces the variance, with a scaling that can be estimated empirically, although the computational cost of this
estimation may not be worthwhile.  In our state space model applications, we consider using both (\ref{lbqr2}) and (\ref{lbq2}), because our approximations may be very rough when the
dynamic factor parameterization of the variational covariance structure contains only a small number of factors.
Here, it may not be so relevant
to consider what happens in the case where the approximation is exact as a guide for reducing the variability of gradient estimates.

\medskip
\section{Parameterizing the covariance matrix\label{sec:CovarianceParameterizations}}

\subsection{Cholesky factor parameterization of $\Sigma$\label{subsec:cholesky_of_covariance}}

\citet{Titsias2014} considered normal variational posterior approximation using a Cholesky factor parameterization and used stochastic gradient methods
for optimizing the KL divergence.  \citet{Challis2013} also considered Cholesky factor parameterizations in Gaussian variational approximation,
but without using stochastic gradient optimization methods.

\medskip

For gradient estimation, \citet{Titsias2014} consider the reparameterization trick with
$\theta=\mu+C\omega$, where $\omega \sim \mathcal{N}(0,I_d)$, $\mu$ is the variational posterior mean and
$\Sigma=CC^\top$ is the variational posterior covariance with lower triangular Cholesky factor $C$ and with the diagonal elements of $C$ being positive. Hence, $\lambda$ contains $\mu$ and the non-zero elements of $C$ and (\ref{lbqr}) becomes, apart from terms not depending on $\lambda$,
\begin{align}\label{eq:ELBO_chol_param_Sigma}
{\cal L}(\lambda) & = E_\omega(\log h(\mu+C\omega))+ \log |C|,
\end{align}
and note that $\log |C|=\sum_i \log C_{ii}$ since $C$ is lower triangular. \citet{Titsias2014} derive the gradient of \eqref{eq:ELBO_chol_param_Sigma}, and  it is straightforward to estimate the expectation $E_\omega$ unbiasedly by simulating one or more samples $\omega$ and computing their average, i.e. plain Monte Carlo integration. The method can also be considered in conjunction with data subsampling. \citet{Kucukelbir2016} considered a similar approach.
\medskip

\subsection{Sparse Cholesky factor parameterization of $\Omega=\Sigma^{-1}$}

\citet{Tan2016}
  consider an approach which parameterizes the precision matrix $\Omega=\Sigma^{-1} = CC^\top$
  in terms of its Cholesky  factor $C$, and impose a sparse structure on $C$ which comes
  from the conditional independence structure in the model.
To minimize notation, we continue to write $C$ for a Cholesky factor used to parameterize the variational posterior even
though here it is the Cholesky factor of the precision matrix rather than of the covariance matrix as in the previous subsection.
Similarly to \citet{Tan2016}, \citet{Archer2016} also consider parameterizing a Gaussian variational approximation using
the precision matrix, but they optimize directly with respect to the elements $\Omega$, while also exploiting sparse matrix computations in obtaining the Cholesky factor of $\Omega$. \cite{Archer2016} are also concerned with
state space models and impose a block tridiagonal structure on the variational posterior precision matrix for the states, using functions of local
data parameterized by deep neural networks to describe blocks of the mean vector and precision matrix corresponding to different states.
Recently \cite{spantini+bm19} have considered variational algorithms for filtering and smoothing based on transport maps;  they
also consider online approaches to estimation of the fixed parameters.
\medskip

Here, we follow \citet{Tan2016} and parameterize  the variational approximation in terms of
the Cholesky factor $C$ of $\Omega$.
Section~\ref{sec:Gaussian_Variational_high_dim_state} shows how to use
the conditional independence structure in the model  to impose a sparse structure on $C$.
We note that sparsity is very important for reducing the number of variational parameters that need to be optimized,
so that a sparse $C$ allows the Gaussian variational approximation method to be extended to high-dimensions.

Using  the reparameterization trick, with $q_\lambda(\theta)=\mathcal{N}(\mu,C^{-\top}C^{-1})$, implies that
$\theta=\mu+C^{-\top}\omega$, with $\omega \sim \mathcal{N}(0,I_d)$. Here,
$C^{-\top}:=(C^{-1})^\top$ and $\lambda:=(\mu^\top,\vech(C)^\top)^\top$ where $\vech(C)$ is the half-vectorization of $C$ stacking
the elements of $C$ below the diagonal in a vector going from left to right.

\medskip

Similarly to
Section~\ref{subsec:cholesky_of_covariance},
\begin{align*}
 {\cal L}(\lambda) & = E_\omega(\log h(\mu+C^{-\top}\omega)-\log q_\lambda(\mu+C^{-\top}\omega)),
\end{align*}
which, apart from terms not depending on $\lambda$, is
\begin{align}\label{eq:ELBO_tan_and_nott}
 {\cal L}(\lambda) & = E_\omega(\log h(\mu+C^{-\top}\omega))-\log |C|;
\end{align}
with $\log |C|=\sum_i \log C_{ii}$ since $C$ is lower triangular. \citet{Tan2016} derive
 the gradient of
\eqref{eq:ELBO_tan_and_nott} and, moreover, consider some improved gradient estimates for which \citet{Roeder2017}
provide a more general understanding. Section~\ref{sec:Gaussian_Variational_high_dim_state} applies
\citeauthor{Roeder2017}'s approach to our methodology.

\medskip

\subsection{Latent factor parameterization of $\Sigma$\label{subsec:OngEtAl_review}}

While the method of \citet{Tan2016} is an attractive way to reduce the number of variational parameters in problems with an
exploitable conditional independence structure, there are models where no such structure is available.
An alternative
parsimonious parameterization  is to use a factor structure
\citep{Geweke1996,Bartholomew2011}.  \citet{ong+ns17} parameterize the variational posterior
covariance matrix
$\Sigma:=BB^\top + D^2$, where $B$ is a $d\times q$ matrix with $q\ll d$, $B_{ij}=0$ for $i<j$, and $D$ is a diagonal matrix
with diagonal elements $\delta=(\delta_1,\dots, \delta_d)^\top$. The variational posterior becomes $q_\lambda(\theta)=
\mathcal{N}(\mu,BB^\top+D^2)$ with $\lambda=(\mu,B,\delta)$; this corresponds to the generative
model $\theta=B\omega+\delta\odot\kappa$ with $(\omega,\kappa)\sim \mathcal{N}(0,I_{d+q})$, where $\odot$ denotes elementwise
multiplication.
\citet{ong+ns17} applied the reparameterization trick based on this transformation and derive gradient
expressions of the resulting evidence lower bound.
\citet{ong+ns17} also outline how to efficiently implement the computations.
 Section \ref{subsec:EfficientComp} discusses this further.

 \medskip

 \section{Methodology\label{sec:Gaussian_Variational_high_dim_state}}
\subsection{Model, prior and posterior\label{SS: model and prior}}
Let $y=(y_1,\dots,y_T)^\top$ be an observed time series, generated by the state space model
\begin{subequations}\label{eq: SS model}
\begin{align}
  y_t|X_t & =x_t  \sim  m_t(y|x_t,\zeta), \;\;\;\;t=1,\dots, T,  \\
  X_t|X_{t-1} &=x_{t-1}  \sim s_t(x|x_{t-1},\zeta),   t=1,\dots, T;
\end{align}
\end{subequations}
where the prior density for $X_0$ is $p(X_0|\zeta)$,
$\zeta$ are the unknown fixed (non-time-varying) parameters in the model,
and the elements of $\zeta$ in the measurement and the state equation are typically different,
but the same symbol is used for brevity.
The observations $y_t$ are conditionally independent given the states $X=(X_0^\top,\dots, X_T^\top)^\top$, and
the prior distribution of $X$ given $\zeta$ is
$$p(X|\zeta)=p(X_0|\zeta)\prod_{t=1}^T s_t(X_t|X_{t-1},\zeta).$$
Let $\theta=(X^\top, \zeta^\top)^\top$ denote the full set of unknowns in the model.

\medskip
The posterior density of $\theta$ is
$p(\theta|y)\propto p(\theta)p(y|\theta)$,
with $p(\theta)=p(\zeta)p(X|\zeta)$, where $p(\zeta)$ is the prior density for $\zeta$ and $p(y|\theta)=\prod_{t=1}^T m_t(y_t|X_t,\zeta)$. Let $p$ be the dimension of $X_t$ and suppose $p$ is large. Approximating the joint posterior distribution in this setting is difficult and Section~\ref{subsec:structure_variational_posterior}  describes  a method based on GVA.

\medskip

\subsection{Example: Multivariate Stochastic Volatility\label{SS: wishart example}}

We illustrate some of the above ideas with the multivariate stochastic volatility model introduced by \citet{philipov+g06},
who used it to model the time-varying  dependence of a portfolio of $k$ assets over $T$ time periods;
Section~\ref{sec:examples2} discusses the model in more detail.

\citeauthor{philipov+g06} assume that the return at time period $t$, $t=1, \dots, T$, is the vector
$y_t=(y_{t1},\dots, y_{tk})^\top$,
\begin{subequations}\label{eq: mv sv example}
\begin{align}
y_t & \sim  \mathcal{N}(0, \Sigma_t), \quad \Sigma_t \in\mathbb{R}^{p\times p} \\
\Sigma_t^{-1} &\sim  \mathrm{Wishart}(\nu,S_{t-1}), \quad S_t=\frac{1}{\nu}  H(\Sigma_t^{-1})^d H^\top , \, S_t \in\mathbb{R}^{p\times p},\, \nu > k, \, 0 < d < 1;
\end{align}
\end{subequations}
$H$ is an unknown positive definite matrix and $\nu, d$ and $k$ are unknown scalars; $\Sigma_0$ is a known positive definite matrix. Section~\ref{sec:examples2} describes the priors for all the
fixed parameters and latents.

The state vector in this model is $\vech(\Sigma_t)$, and has dimension $p(p+1)/2$.
It is very high dimensional when  $p$ is large, e.g. it is 55 dimensional for $p=10$. It may then  be
necessary to use particle methods,
which are typically very slow, to estimate such a high dimensional model.
Section~\ref{sec:examples2} gives a more complete discussion.

\medskip

\subsection{Structure of the variational approximation\label{subsec:structure_variational_posterior}}
The variational posterior density $q_\lambda(\theta)$ for $\theta$, is based on a generative model which has the dynamic factor structure,
\begin{align}
 X_t & =Bz_t+\epsilon_t\;\;\;\;\epsilon_t\sim \mathcal{N}(0,D_t^2),   \label{ldsm}
\end{align}
where $B$ is a $p\times q$ matrix, $q\ll p$, and $D_t$ is a diagonal matrix with diagonal elements $\delta_t=(\delta_{t1},\dots,\delta_{tp})^\top$.
Let $z=(z_0^\top,\dots, z_T^\top)^\top$ and
$\rho=(z^\top,\zeta^\top)^\top\sim \mathcal{N}(\mu,\Sigma)$, $\Sigma=C^{-\top}C^{-1}$ where $C$ is the Cholesky factor of the precision matrix of
$\rho$.  We will write $q$ for the dimension of each $z_t$, with $q \ll p = \dim(X_t)$, and assume that
$$C=\left[\begin{array}{cc}
 C_1 & 0 \\\
 0 & C_2
\end{array}\right],$$
is block diagonal;
$C_1$ is the Cholesky factor of the precision matrix $\Omega_1 = C_1 C_1^\top$ for $z$;
 and $C_2$ is the Cholesky factor for the precision matrix of $\zeta$.
Let  $\Sigma_1$ denote the covariance matrix of $z$. We further assume that
$C_1$ is lower triangular with a single band,  implying that $\Omega_1$ is band tridiagonal.
See Section~\ref{app:SparsityPrecisionMatrix} of the supplement for details.
For a Gaussian distribution, zero elements in the precision matrix represent conditional independence relationships.
In particular, the sparse structure imposed on $C_1$ means that in the generative distribution for $\rho$,
the latent variable $z_t$, given $z_{t-1}$ and $z_{t+1}$, is conditionally independent of the remaining
elements of $z$;  in other words, if we think of the variables $z_t$, $t=1,\dots,T$ as a time series,
they have a Markovian dependence structure.

We now construct the variational distribution for $\theta$ through
\begin{align*}
  \theta & = \left[\begin{array}{cc} X \\ \zeta \end{array}\right] = \left[\begin{array}{cc} I_{T+1}\otimes B & 0 \\ 0 & I_P\end{array}\right]\rho + \left[\begin{array}{c} \epsilon \\ 0 \end{array}\right],
\end{align*}
where $\otimes$ denotes the Kronecker product, $P$ is the dimension of $\zeta$, and $\epsilon=(\epsilon_0^\top,\dots,\epsilon_T^\top)^\top$.
We can apply the reparameterization trick by writing $\rho=\mu+C^{-\top}\omega$, where $\omega\sim \mathcal{N}(0,I_{q(T+1)+P})$. Then,
\begin{align}
\theta & = W\rho+Ze= W\mu+WC^{-\top}\omega+Ze,  \label{reparformula}
\end{align}
where
$$W=\left[\begin{array}{cc} I_{T+1}\otimes B & 0_{p(T+1)\times P} \\ 0_{P\times q(T+1)} & I_P \end{array}\right];  \;\;\;\;Z=\left[\begin{array}{cc} D & 0_{p(T+1)\times P} \\ 0_{P\times p(T+1)} & 0_{P\times P} \end{array}\right],\;\;\;\; e=\left[\begin{array}{c} \epsilon \\ 0_{P\times 1} \end{array}\right], $$
$D$ is a diagonal matrix with diagonal entries $(\delta_0^\top,\dots,\delta_T^\top)^\top$, and
$u=(\omega^\top,\epsilon^\top)^\top \sim \mathcal{N}(0,I_{(p+q)(T+1)+P})$.
We also write $\omega=(\omega_1^\top,\omega_2^\top)^\top$, where the blocks of this partition follow those of
$\rho=(z^\top,\zeta^\top)^\top$.

\medskip

The factor model  above describes  the covariance structure for the states, as well as
for dimension reduction in the variational posterior mean of the states, since $E(X_t)=B\mu_t$, where $\mu_t=E(z_t)$.  An alternative is to set $E(z_t)=0$ and use
\begin{align}
 X_t & = \mu_t+Bz_t+\epsilon_t, \label{hdsm}
\end{align}
where $\mu_t$ is now a $p$-dimensional vector specifying the variational posterior mean for $X_t$ directly.

\medskip

We call parameterization \eqref{ldsm} the low-dimensional state
mean (LD-SM) parameterization, and parameterization \eqref{hdsm} the high-dimensional state mean (HD-SM) parameterization. In both parameterizations, $B$ forms a basis for $X_t$, which is reweighted over time according to the latent weights (factors) $z_t$. The LD-SM parameterization provides information on how these basis functions are reweighted over time to form the  approximate posterior mean, since $E(X_t)=B\mu_t$ and we infer both $B$ and $\mu_t$ in the variational optimization.
Section~\ref{sec:examples} illustrates this basis representation. Appendix \ref{app:grad_expressions_lemmas} outlines the gradients and their derivation for the LD-SM parameterization. Derivations for the HD-SM parameterization follow those for the LD-SM case with straightforward minor adjustments.

\medskip

Algorithm~\ref{Alg:variational_optimization_dynamic_factor_approximation} outlines the stochastic gradient ascent algorithm that
maximizes \eqref{lbqr}. Lemmas~\ref{lem:standard_gradient} and \ref{lem:Roeder_gradient} in
Appendix~\ref{app:grad_expressions_lemmas} obtain
the gradients.
Their expectations are estimated by one or more samples from $u$. We note that
although automatic differentiation implementations have improved enormously in recent years, and there
are some implementations of linear algebra operators supporting sparse precision matrices
\citep{durrande19}, it is not straightforward to use automatic differentiation for the structured
matrix manipulations necessary for efficient computation here which make use of a combination of sparse and low rank
matrix computations.

The gradients are computed  by
either \eqref{gradmuss}-- \eqref{gradCss} in Lemma~\ref{lem:standard_gradient}, or by
equations \eqref{sgradmuss},
\eqref{sgradBss}, \eqref{sgraddss} and \eqref{sgradCss} in Lemma~\ref{lem:Roeder_gradient}.
If the variational approximation to the posterior is accurate, then
\eqref{sgradmuss}, \eqref{sgradBss}, \eqref{sgraddss} and \eqref{sgradCss} corresponding to the gradient estimates recommended in \citet{Roeder2017} may be prefered;  Section~\ref{sec:stochasticgradientvariational} explains the reasons.
  However, since we consider massive dimension reduction with only a small numbers of factors the approximation may be crude and we therefore investigate both approaches in later examples.

\medskip
Finally,  it is well known that factor models have identifiability issues \citep{shapiro1985identifiability}.  The choice of identifying constraints
in factor models can matter, particularly for interpretation.
However, here the choice of any identifying constraints is not crucial as
we do not interpret either the factors or the loadings, but only use them for modeling the covariance matrix and,
in the LD-SM parameterization, also the variational mean.  Factor structures are widely used
as a method for achieving parsimony in the model formulation in the state space framework for spatio-temporal data \citep{wikle+c99, lopes2008},
multivariate stochastic volatility \citep{ku2014flexible, philipov2006factor}, and in other applications \citep{aguilar+w00,carvalho2008}.
This is distinct from the main idea in the present paper of using a dynamic factor structure
for dimension reduction in a variational approximation for getting parsimonious but flexible descriptions of
dependence in the posterior for approximate inference.

\medskip

\begin{algorithm}[tbh]
	\caption{Stochastic gradient ascent for optimizing the variational objective $\mathcal{L}(\lambda)$ in \eqref{lbqr}. See Appendix \ref{app:grad_expressions_lemmas} for notation and gradients.} 
	\SetKwInOut{Input}{Input}
	\vspace{1mm}
	\Input {Starting values $\lambda_0 \leftarrow (\mu_0, B_0, \delta_0, C_0)$, learning rates $\eta_\mu, \eta_B, \eta_\delta, \eta_C$, number of iterations $M$.}
	\vspace{1mm}
		\vspace{1mm}
	\For{$m = 1$ \KwTo $M$} {
		$\mu_m  \leftarrow \mu_{m-1} + \eta_\mu \odot \widehat{\nabla_\mu\mathcal{L}}(\lambda_{m-1})$ \Comment{$\nabla_{\mu}\mathcal{L}$ in \eqref{gradmuss} or \eqref{sgradmuss}} \\
							\vspace{1mm}
 $\lambda_{m-1} \leftarrow  (\mu_{m}, B_{m-1}, \delta_{m-1}, C_{m-1})$ \Comment{Update $\mu$}\\
						\vspace{1mm}
		$B_m  \leftarrow B_{m-1} + \eta_B \odot \widehat{\nabla_{\mathrm{vec}(B)}\mathcal{L}}(\lambda_{m-1})$ \Comment{$\nabla_{\mathrm{vec}(B)}\mathcal{L}$ in \eqref{gradBss} or \eqref{sgradBss}} \\
							\vspace{1mm}
 $\lambda_{m-1} \leftarrow  (\mu_{m}, B_{m}, \delta_{m-1}, C_{m-1})$ \Comment{Update $B$}\\
						\vspace{1mm}
		$\delta_m  \leftarrow \delta_{m-1} + \eta_\delta \odot \widehat{\nabla_{\delta}\mathcal{L}}(\lambda_{m-1})$ \Comment{$\nabla_{\delta}\mathcal{L}$ in \eqref{graddss} or \eqref{sgraddss}} \\
							\vspace{1mm}
 $\lambda_{m-1} \leftarrow  (\mu_{m}, B_{m}, \delta_{m}, C_{m-1})$ \Comment{Update $\delta$}\\
						\vspace{1mm}
		$C_m  \leftarrow C_{m-1} + \eta_C \odot \widehat{\nabla_C\mathcal{L}}(\lambda_{m-1})$ \Comment{$\nabla_{C}\mathcal{L}$ in \eqref{gradCss} or \eqref{sgradCss}} \\
							\vspace{1mm}
 $\lambda_{m} \leftarrow  (\mu_{m}, B_{m}, \delta_{m}, C_{m})$ \Comment{Update $C$}\\

							\vspace{1mm}
 $\lambda_{m-1} \leftarrow \lambda_{m}$  \Comment{Update $\lambda$}\\
	}
	\vspace{1mm}

	\textbf{Output:} $\lambda_m$
	
\label{Alg:variational_optimization_dynamic_factor_approximation}
\end{algorithm}

\medskip

\subsection{Efficient computation}\label{subsec:EfficientComp}
The gradient estimates for the lower bound (see Appendix \ref{app:grad_expressions_lemmas} for expressions) are efficiently computed using a combination of sparse
matrix operations (for evaluating terms such as $C^{-\top}\omega$ and the high-dimensional matrix multiplications in the expressions) and, as in \citet{ong+ns17}, the Woodbury identity for dense matrices such as $(W\Sigma W^\top+Z^2)^{-1}$ and $(W_1\Sigma_1W^\top+D^2)^{-1}$. The Woodbury identity is
\begin{align*}
  (\Lambda \Gamma \Lambda^\top + \Psi)^{-1} = & \Psi^{-1}-\Psi^{-1}\Lambda(\Lambda^\top\Psi^{-1}\Lambda + \Gamma^{-1})^{-1}\Lambda^\top \Psi^{-1}
\end{align*}
for conformable matrices $\Lambda, \Gamma$ and diagonal $\Psi$; it reduces the required computations into a much lower dimensional space since $q \ll p$ and  $\Psi$ is diagonal.

\medskip

\section{Application 1: Spatio-temporal model}\label{sec:examples}
\subsection{Eurasian collared-dove data}

The first example considers the spatio-temporal model of \citet{wikle+h06}
for a dataset on the spread of the Eurasian collared-dove across North America. The dataset consists of the
number of doves $y_{s_{i}t}$ observed at location $s_{i}$ (latitude,
longitude) $i=1,\dots,p,$ in year $t=1,\dots,T=18,$ corresponding
to an observation period of 1986-2003. The spatial locations
correspond to
$p=111$ grid points with the dove counts aggregated
within each area. See \citet{wikle+h06} for details.
The count observed at location $s_{i}$ at time $t$ depends on the
number of times $N_{s_{i}t}$ that the location was sampled. However, this
variable is unavailable and therefore we set the offset in the model
to zero, i.e. $\log(N_{s_{i}t})=0$.

\medskip

\subsection{Model\label{subsec:Model}}

Let $y_t=(y_{s_1t},\dots, y_{s_pt})^\top$ denote the count data at time $t$.  \citet{wikle+h06} model $y_t$ as conditionally independent Poisson
variables, where the log means are given by a latent high-dimensional Markovian process $u_t$ plus measurement error.  The dynamic process $u_t$ evolves according to a discretized
diffusion equation;  specifically,
the model in \citet{wikle+h06} is
\begin{align*}
y_{t}|v_{t} & \sim\mathrm{Poisson}(\mathrm{diag}(N_{t})\exp(v_{t}))\quad y_{t},N_{t},v_{t}\in\mathbb{R}^{p}\\
v_{t}|u_{t},\sigma_{\epsilon}^{2} & \sim \mathcal{N}(u_{t},\sigma_{\epsilon}^{2}I_{p}),\quad u_{t}\in\mathbb{R}^{p},I_{p}\in\mathbb{R}^{p\times p},\sigma_{\epsilon}^{2}\in\mathbb{R}^{+}\\
u_{t}|u_{t-1},\psi,\sigma_{\eta}^{2} & \sim \mathcal{N}(H(\psi)u_{t-1},\sigma_{\eta}^{2}I_{p}),\quad\psi\in\mathbb{R}^{p},H(\psi)\in\mathbb{R}^{p\times p},\sigma_{\eta}^{2}\in\mathbb{R}^{+},
\end{align*}
with priors $\sigma_{\epsilon}^{2},\sigma_{\psi}^{2},\sigma_{\alpha}^{2}\sim\mathrm{IG}(2.8,0.28),\sigma_{\eta}^{2}\sim\mathrm{IG}(2.9,0.175)$
and
\begin{align*}
u_{0} & \sim \mathcal{N}(0,10I_{p})\\
\psi|\alpha,\sigma_{\psi}^{2} & \sim \mathcal{N}(\Phi\alpha,\sigma_{\psi}^{2}I_{p}),\quad\Phi\in\mathbb{R}^{p\times l}\text{,}\alpha\in\mathbb{R}^{l},\sigma_{\psi}^{2}\in\mathbb{R}^{+}\\
\alpha & \sim \mathcal{N}(0,\sigma_{\alpha}^{2}R_{\alpha}),\quad\alpha_{0}\in\mathbb{R}^{l},R_{\alpha}\in\mathbb{R}^{l\times l},\sigma_{\alpha}^{2}\in\mathbb{R}^{+}.
\end{align*}
$\mathrm{Poisson(\cdot)}$ is the Poisson distribution for a (conditionally)
independent response vector parameterized in terms of its expectation
and $\mathrm{IG}(\cdot)$ is the inverse-gamma distribution with shape
and scale as arguments. The spatial dependence is modeled via the
prior mean $\Phi\alpha$ of the diffusion coefficients $\psi$, where
$\Phi$ consists of the $l$ orthonormal eigenvectors with the largest
eigenvalues of the spatial correlation matrix $R(c)=\exp(-cd)\in\mathbb{R}^{p\times p}$,
where $d$ is the Euclidean distance between pairwise grid locations
in $s_{i}$. Finally, $R_{\alpha}$ is a diagonal matrix with the
$l$ largest eigenvalues of $R(c)$. We follow \citet{wikle+h06}
and set $l=1$ and $c=4$.

\medskip

Let $u=(u_{0}^{\top},\dots u_{T}^{\top})^{\top}$ , $v=(v_{1}^{\top},\dots v_{T}^{\top})^{\top}$
and denote the parameter vector
\[
\theta=(u,v,\psi,\alpha,\log\sigma_{\epsilon}^{2},\log\sigma_{\eta}^{2},\log\sigma_{\psi}^{2},\log\sigma_{\alpha}^{2}),
\]
which we infer through the posterior
\begin{eqnarray}
p(\theta|y) & \propto & \sigma_{\epsilon}^{2}\sigma_{\eta}^{2}\sigma_{\psi}^{2}\sigma_{\alpha}^{2}p(\sigma_{\epsilon}^{2})p(\sigma_{\eta}^{2})p(\sigma_{\psi}^{2})p(\sigma_{\alpha}^{2})p(\alpha|\sigma_{\alpha}^{2})p(\psi|\alpha,\sigma_{\psi}^{2})\nonumber \\
 & ~ & p(u_{0})\prod_{t=1}^{T}p(u_{t}|u_{t-1},\psi,\sigma_{\eta}^{2})p(v_{t}|u_{t},\sigma_{\epsilon}^{2})p(y_{t}|v_{t}),\label{eq:PosteriorDistribution}
\end{eqnarray}
with $y=(y_{1}^{\top},\dots,y_{T}^{\top})^{\top}$. Section \ref{app:GradientLogPosterior} of the supplement derives the gradient of the log-posterior required by the variational Bayes (VB) approach.

\medskip

\subsection{Variational approximations of the posterior distribution\label{subsec:VA_posterior_dist}}
Section~\ref{sec:Gaussian_Variational_high_dim_state} considers two different parameterization of the low rank
approximation, in which either the state vector $X_{t}$ has mean
$E(z_{t})=B\mu_{t}$, $\mu_{t}\in\mathbb{R}^{q}$ (low-dimensional
state mean, LD-SM) or $X_{t}$ has a separate mean $\mu_{t}\in\mathbb{R}^{p}$
and $E(z_{t})=0$ (high-dimensional state mean, HD-SM). In this particular
application there is a third choice of parameterization which
we now consider.

\medskip

The model in Section~\ref{subsec:Model} connects the data with the high-dimensional state vector $u_{t}$ via a high-dimensional
auxiliary variable $v_{t}$. In the notation of Section~\ref{sec:Gaussian_Variational_high_dim_state},
we can include $v$ in $\zeta$, in which
case the parameterization of the variational posterior is the one described there.
We refer to this parameterization as a low-rank state (LR-S).
However, it is clear from \eqref{eq:PosteriorDistribution} that there
is posterior dependence between $u_{t}$ and $v_{t}$, but the variational
approximation in Section~\ref{sec:Gaussian_Variational_high_dim_state} omits the dependence between $z$ and $\zeta$.
Moreover, $v_{t}$ is also high-dimensional, but the LR-S parameterization
does not reduce its dimension. An alternative parameterization that
deals with both considerations includes $v$ in the $z$-block, which
we refer to as the low-rank state and auxiliary variable (LR-SA) parameterization. This comes at
the expense of omitting dependence between $v_{t}$ and $\sigma_{\epsilon}^{2}$,
but also becomes more computationally costly because, while the total
number of variational parameters is smaller (see Table \ref{tab:VB_parameterization} in Section \ref{app:Sparsity} of the supplement),
the dimension of the $z$-block increases ($B$ and $C_{1}$) and
the main computational effort lies here and not in the $\zeta$-block.
Table
\ref{tab:ELBO_final_iteration} shows the CPU times relative to the LR-S parameterization. The LR-SA parameterization requires
a small modification of the derivations in Section \ref{sec:Gaussian_Variational_high_dim_state}, which we outline
in detail in Section~\ref{sec:LRSA_derivations} of the supplement
as they can be useful for other models with a high-dimensional auxiliary variable.

\medskip

It is straightforward to deduce conditional independence relationships
in \eqref{eq:PosteriorDistribution} to build the Cholesky factor
$C_{2}$ of the precision matrix $\Omega_{2}$ of $\zeta$ in Section
4, with
\[
\zeta=\begin{cases}
(v,\psi,\alpha,\log\sigma_{\epsilon}^{2},\log\sigma_{\eta}^{2},\log\sigma_{\psi}^{2},\log\sigma_{\alpha}^{2}) & \text{(LR-S)}\\
(\psi,\alpha,\log\sigma_{\epsilon}^{2},\log\sigma_{\eta}^{2},\log\sigma_{\psi}^{2},\log\sigma_{\alpha}^{2}) & \text{(LR-SA)}.
\end{cases}
\]
Section \ref{sec:Gaussian_Variational_high_dim_state} outlines the construction of the Cholesky factor $C_{1}$ of the
precision matrix $\Omega_{1}$ of $z$, whereas
the minor modification needed for LR-SA is in Section \ref{sec:LRSA_derivations} of the supplement. We note that, regardless of the parameterization, we obtain massive parsimony (between $6,428\text{-}11,597$ variational parameters) compared to the saturated Gaussian variational approximation which in this application has $8,923,199$ parameters; see Section \ref{app:Sparsity}
of the supplement for further details.

We consider four different variational parameterizations, combining
each of LR-SA or LR-S with the different parameterization of the means
of $X_{t}$, i.e. LD-SM or HD-SM. In all cases, we let $q=4$ and
perform $10,000$ iterations of a stochastic optimization algorithm with
learning rates chosen adaptively according to the ADADELTA approach
\citep{Zeiler2012}. We use the gradient estimators in \citeauthor{Roeder2017}, i.e. \eqref{sgradmuss}, \eqref{sgradBss}, \eqref{sgraddss} and \eqref{sgradCss}, although we found no noticeable difference compared to (\ref{gradmuss}) -- (\ref{gradCss}); it is likely that this is due   to the small number of factors as described in Sections \ref{sec:stochasticgradientvariational} and \ref{sec:Gaussian_Variational_high_dim_state}. Our choice was motivated by computational efficiency as some terms cancel out using the approach in \citeauthor{Roeder2017}. We initialize
 $B$ and $C$ as
unit diagonals and, for parity, $\mu$ and $D$ are chosen to match the starting values of the Gibbs sampler
in \citeauthor{wikle+h06}.

\medskip

Figure~\ref{fig:ELBOs} monitors the convergence via the estimated
value of ${\cal L}(\lambda)$ using a single Monte Carlo sample. Table \ref{tab:ELBO_final_iteration}
presents estimates of ${\cal L}(\lambda)$ at the final iteration using $100$ Monte Carlo samples.
The results suggest that the best VB parameterization in terms of ELBO is the low-rank state algorithm (LR-SA) with, importantly, a high-dimensional state-mean (HD-SM) (otherwise the poorest VB approximation is achieved, see Table \ref{tab:ELBO_final_iteration}). However, Table \ref{tab:VB_parameterization} shows that this parameterization is about three times as CPU intensive. The fastest VB parameterizations are both Low-Rank State (LR-S) algorithms, and modeling the state mean separately for these
does not seem to improve ${\cal L}(\lambda)$ (Table \ref{tab:ELBO_final_iteration})
and is also slightly more computationally expensive (Table \ref{tab:VB_parameterization}). Taking these considerations into account, the final choice of VB parameterization we use
for this model
is the low-rank state with low-dimensional state mean (LR-S + LD-SM). Section~\ref{subsec:Results} shows that this parameterization gives accurate approximations for our analysis. For the rest of this example, we benchmark the VB posterior from LR-S + LD-SM against the MCMC approach in \citeauthor{wikle+h06}.

\begin{figure}[h]
\centering
\includegraphics[width=0.3\columnwidth]{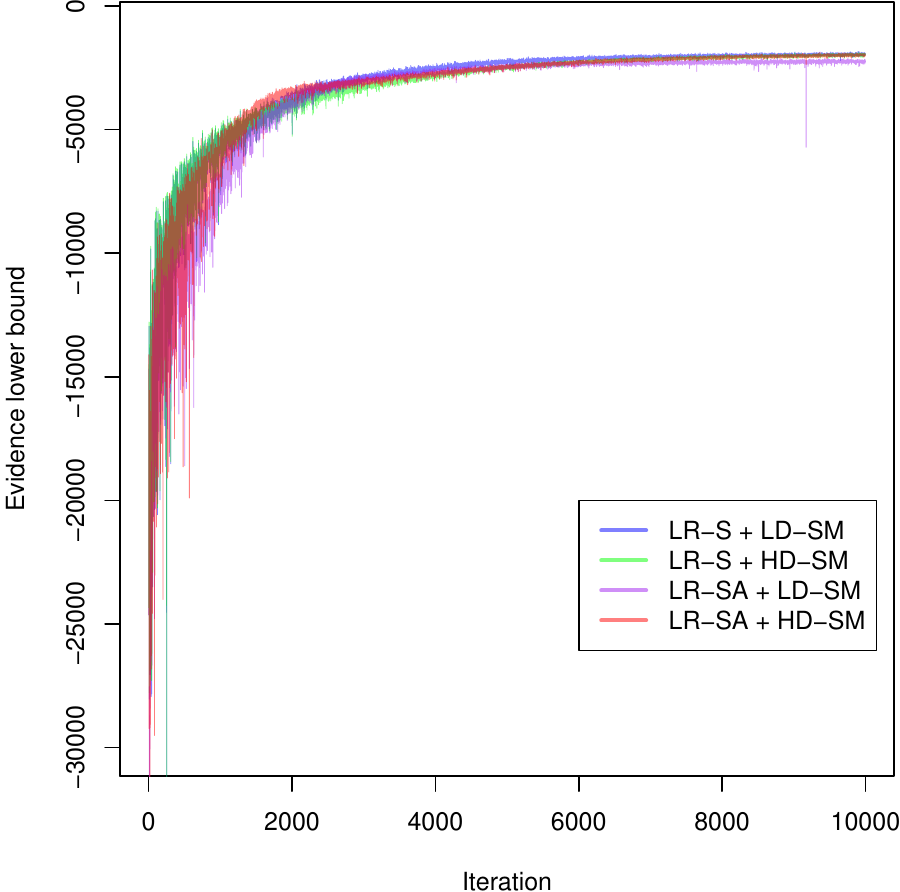}

\caption{${\cal L}(\lambda)$ for the variational approximations for the spatio-temporal example. The figure shows
the estimated value of ${\cal L}(\lambda)$ vs iteration number for the four different parameterizations, see Section \ref{subsec:VA_posterior_dist} or Table \ref{tab:ELBO_final_iteration} for abbreviations.}
\label{fig:ELBOs}
\end{figure}

\begin{table}[h]
\centering \caption{${\cal L}(\lambda)$ and CPU time for the VB parameterizations in the spatio-temporal and Wishart process example. The table shows the estimated value of ${\cal L}(\lambda)$ for the different VB parameterizations by combining low-rank state / low-rank state and auxiliary (LR-S / LR-SA)
with either of low-dimensional state mean / high-dimensional state
mean (LD-SM / HD-SM). The estimate and its $95$\% confidence interval
are computed at the final iteration using $100$ Monte Carlo samples. The table also show the relative CPU (R-CPU) times, where the reference is LD-SM.
}

\begin{tabular}{llclclcc}
\toprule
{\footnotesize{}{}}\textbf{\footnotesize{}parameterization}  &  &  &  &  &  &  & \tabularnewline
\cmidrule{1-1}
\textbf{\footnotesize{}{}}\emph{\footnotesize{}Spatio-temporal} &  & {\footnotesize{}{}R-CPU}  &  & {\footnotesize{}{}$\mathcal{L}(\lambda_{\mathrm{opt}})$}  &  & {\footnotesize{}{}Confidence interval}  & \tabularnewline
\cmidrule{1-1} \cmidrule{3-3} \cmidrule{5-5} \cmidrule{7-7}
{\footnotesize{}{}LR-S + LD-SM}  &  & {\footnotesize{}{}$1$}  &  & {\footnotesize{}{}$\textnormal{-}1,996$}  &  & {\footnotesize{}{}$[\textnormal{-}2,004;\textnormal{-}1,988]$}  & \tabularnewline
{\footnotesize{}{}LR-S + HD-SM}  &  & {\footnotesize{}{}$1.005$}  &  & {\footnotesize{}{}$\textnormal{-}2,024$}  &  & {\footnotesize{}{}$[\textnormal{-}2,032;\textnormal{-}2,016]$}  & \tabularnewline
{\footnotesize{}{}LR-SA + LD-SM}  &  & {\footnotesize{}{}$3.189$}  &  & {\footnotesize{}{}$\textnormal{-}2,158$}  &  & {\footnotesize{}{}$[\textnormal{-}2,167;\textnormal{-}2,148]$}  & \tabularnewline
{\footnotesize{}{}LR-SA + HD-SM}  &  & {\footnotesize{}{}$3.017$}  &  & {\footnotesize{}{}$\textnormal{-}1,909$}  &  & {\footnotesize{}{}$[\textnormal{-}1,918;\textnormal{-}1,900]$}  & \tabularnewline
 &  &  &  &  &  &  & \tabularnewline
\cmidrule{1-1}
\textbf{\footnotesize{}{}}\emph{\footnotesize{}Wishart process} &  &  &  &  &  &  & \tabularnewline
\cmidrule{1-1}
{\footnotesize{}{}LR-S + LD-SM}  &  & {\footnotesize{}{}$1$}  &  & {\footnotesize{}{}$\textnormal{-}1,121$}  &  & {\footnotesize{}{}$[\textnormal{-}1,126;\textnormal{-}1,115]$}  & \tabularnewline
{\footnotesize{}{}LR-S + HD-SM}  &  & {\footnotesize{}{}$1.0004$}  &  & {\footnotesize{}{}$\textnormal{-}1,040$}  &  & {\footnotesize{}{}$[\textnormal{-}1,046;\textnormal{-}1,035]$}  & \tabularnewline
\bottomrule
\end{tabular}\label{tab:ELBO_final_iteration}

\end{table}

\medskip

\subsection{MCMC settings\label{SS: mcmc settings}}

Before comparing VB to MCMC, it is necessary to determine a reasonable
burn-in period and number of iterations for inference for the Gibbs sampler in \citeauthor{wikle+h06}.
It is clear that it is infeasible to monitor convergence for every
single parameter in such a large scale model as \eqref{eq:PosteriorDistribution},
and therefore we focus on $\psi$, $u_{18}$ and $v_{19}$, which
are among the variables considered in the analysis in Section~\ref{subsec:Results}.

\medskip

\citeauthor{wikle+h06} use $50,000$ iterations of which $20,000$
are discarded as burn-in.
We generate $4$ sampling chains with these settings and inspect convergence
using the $\mathtt{coda}$ package \citep{plummer2006coda} in $\mathtt{R}$.
We compute the Scale Reduction Factors (SRF) \citep{gelman1992inference}
for $\psi,u_{18}$ and $v_{19}$ as a function of the number of Gibbs
iterations. The adequate number of iterations in MCMC depends on what functionals of the parameters are of interest;  here
 we monitor convergence for these quantities
since we report marginal posterior distributions for these quantities later.
The scale reduction factor of a parameter measures if
there is a significant difference between the variance within the
four chains and the variance between the four chains of that parameter.
We use the rule of thumb that concludes convergence when $\mathrm{SRF}<1.1$,
which gives a burn-in period of approximately $40,000$ here, for these functionals. After discarding
these samples and applying a thinning of $10$ we are left with $1,000$
posterior samples for inference. However, as the draws are auto-correlated,
this does not correspond to $1,000$ independent draws used in the
analysis in Section \ref{subsec:Results} (note that we obtain independent
samples from our variational posterior). To decide how many Gibbs
samples are equivalent to $1,000$ independent samples for $\psi,u_{18}$
and $v_{19}$, we compute the Effective Sample Size (ESS) which takes
into account the auto-correlation of the samples. We find that the
smallest is $\mathrm{ESS}=5$ and hence we require $200,000$ iterations
after a thinning of $10$, which makes a total of $2,000,000$ Gibbs
iterations, excluding the burn-in of $40,000$.  Thinning is advisable here due
to memory issues --- it is impractical to store $2,000,000$ iterations for each parameter (which may be used, for example, to estimate kernel densities) in high-dimensional models.

\subsection{Analysis and results\label{subsec:Results}}
We first consider inference on the diffusion coefficient $\psi_i$ for location $i$. Figure \ref{fig:Distribution_delta} shows the ``true'' posterior (represented by MCMC) together with the variational approximation for six locations described in the caption of the figure. The figure shows that the posterior distribution is highly skewed for locations with zero dove counts and approaches normality as the dove counts increase. Consequently, the accuracy of the variational posterior (which is Gaussian) improves with increasing dove counts. The figure also shows the phenomena we described in the beginning of Section \ref{sec:Intro}: there is a discrepancy between the posterior densities. However, as we will see, it affects neither the location estimates of the intensity of the process nor its prediction.

\begin{figure}[H]
\centering
\includegraphics[width=0.5\columnwidth]{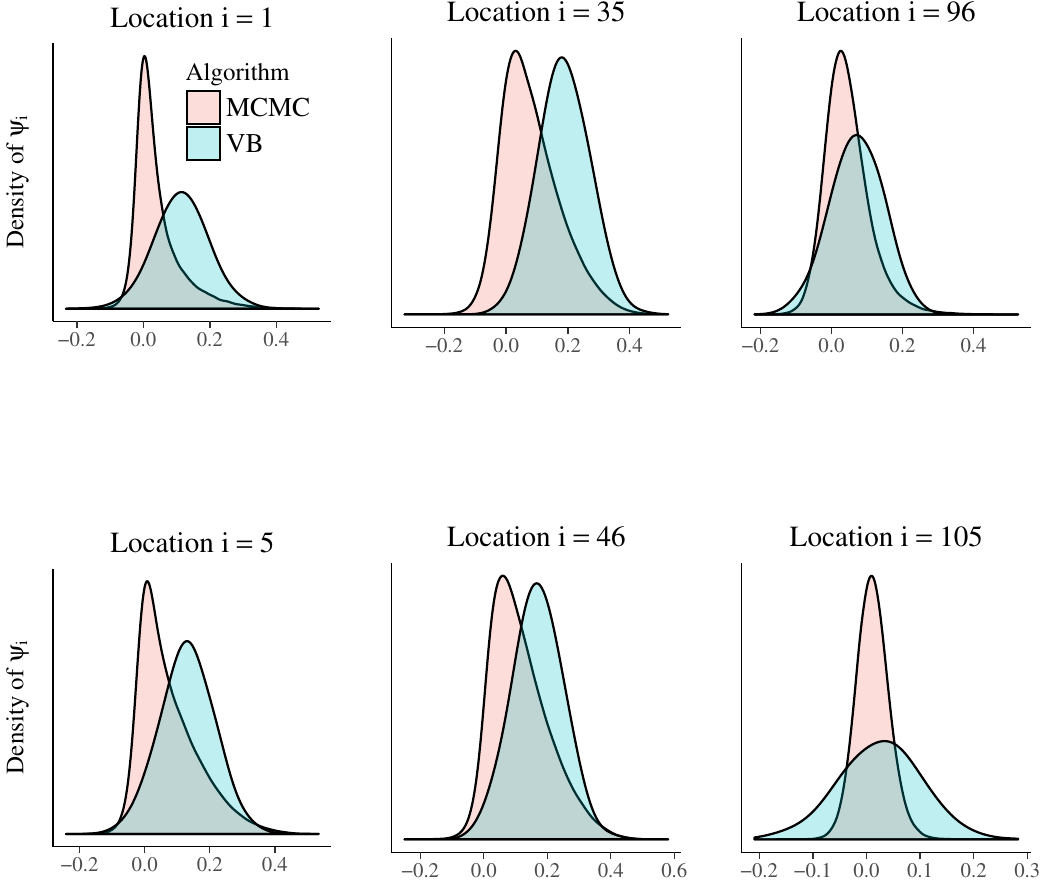}
\caption{Distribution of the diffusion coefficients.  The figure shows the posterior distribution of $\psi_i$ obtained by MCMC and VB. The locations are divided into three categories (total doves over time within brackets): zero count locations (Idaho, $i=1\, [0]$ , Arizona $i=5\,[0]$, left panels), low count locations (Texas, $i=35\,[16], 46\,[21]$, middle panels) and high count locations (Florida, $i=96\,[1,566], 105\,[1,453]$, right panels).}
\label{fig:Distribution_delta}
\end{figure}

\begin{figure}[H]
\centering
\includegraphics[width=0.5\columnwidth]{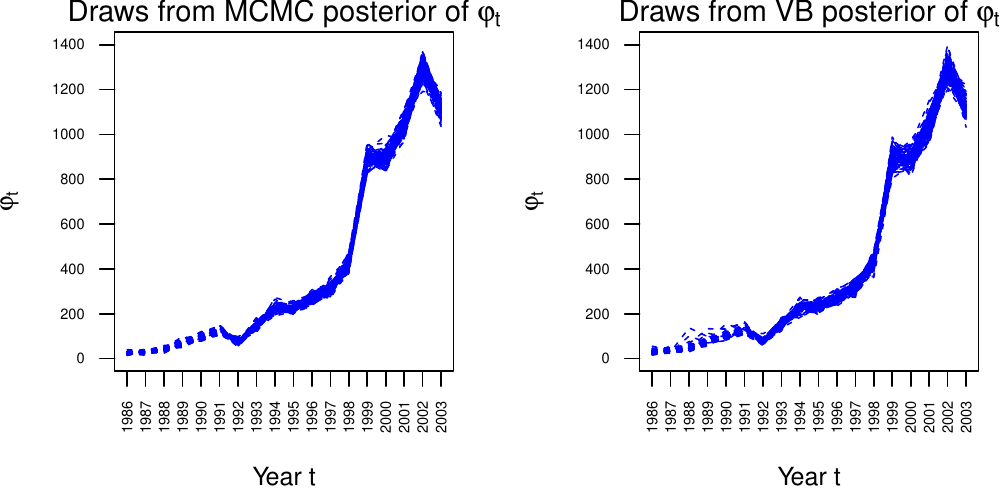}

\caption{Samples from the posterior sum of dove intensity over the spatial
grid for each year. The figure shows $100$ samples from the posterior
distribution of $\varphi_{t}=\sum_{i}\exp(v_{it})$ obtained by MCMC
(left panel) and VB (right panel).}
\label{fig:lambda_each_year}
\end{figure}

\medskip

Figure\emph{ }\ref{fig:lambda_each_year} shows 100 VB and MCMC posterior
samples of the dove intensity for each year summed over the spatial
locations, i.e. $\varphi_{t}=\sum_{i}\exp(v_{it})$. The two posteriors
are similar and show an exponential increase
of doves until the year $2002$ followed by a steep decline for $2003$.

Figure~\ref{fig:dove_intensity_MCMC_vs_VB} summarises some spatial properties of the model.
It shows a heat map of the MCMC
and VB posterior means of the dove intensity $\varphi_{it}=\exp(v_{it})$
for the last five years of the data, overlaid on a map of the U.S.
The figure confirms that VB gives accurate location estimates of the spatial process in this example.

\begin{figure}[h]
\centering
\includegraphics[width=0.6\columnwidth]{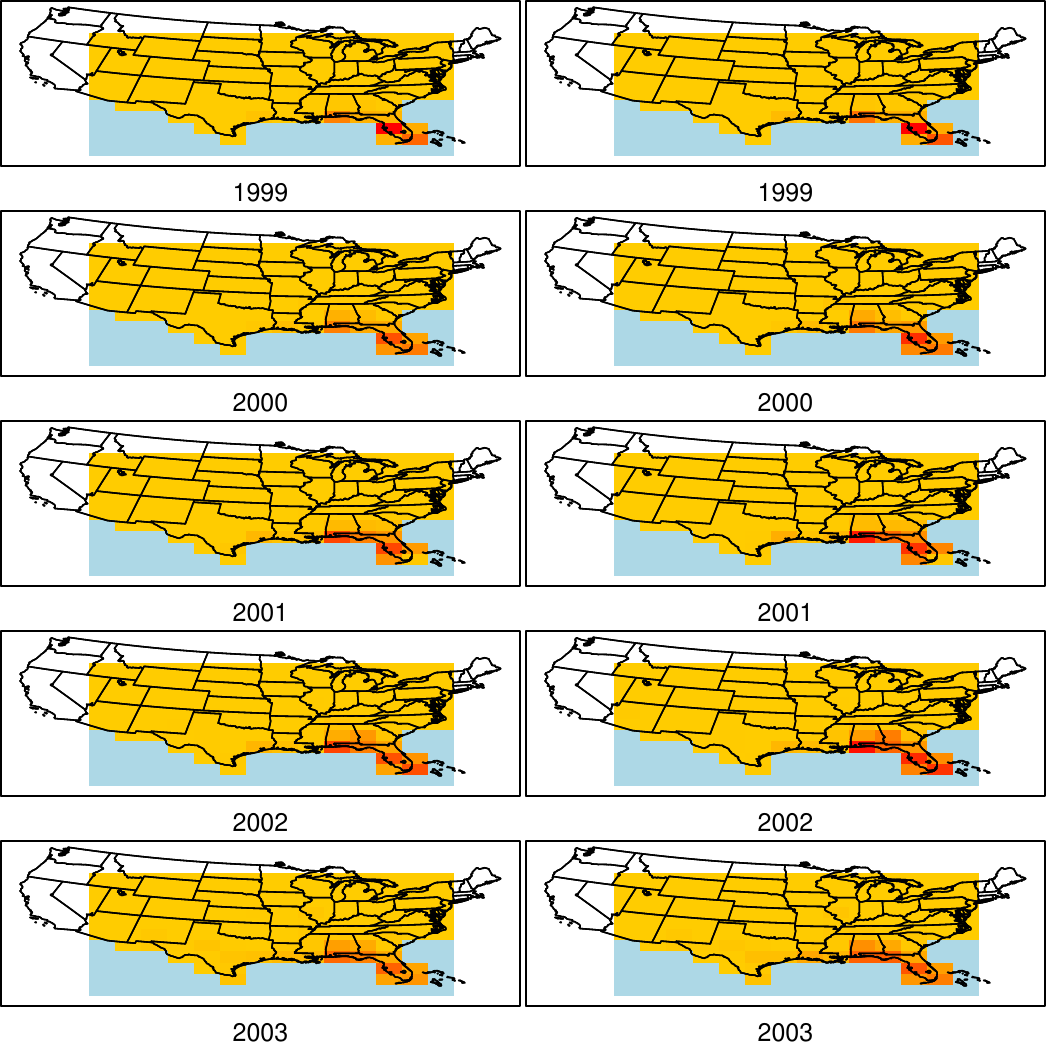}

\caption{Posterior dove intensity for the years 1999-2003. The figure
shows the posterior mean of $\varphi_{it}=\exp(v_{it})$ computed
by MCMC (left panels) and VB (right panels) for $i=1,\dots,p=111,$
and the last $5$ years of the data ($t=14,15,16,17,18$). The results
are illustrated with a spatial grid plotted together with a map of
the United States, where the colors vary between low intensity (yellow)
and high intensity (red). The light blue color is for aesthetic reasons
and does not correspond to observed locations. }
\label{fig:dove_intensity_MCMC_vs_VB}
\end{figure}

\medskip

We draw the following conclusions from the analysis
using the MCMC and VB posteriors, which are nearly identical. First,
Figure~\ref{fig:dove_intensity_MCMC_vs_VB} shows that the
the dove intensity is most pronounced in the South East states, in
particular Florida.
Second, Figure~\ref{fig:dove_intensity_MCMC_vs_VB} also shows that
it is likely that the decline of doves for year $2003$ in Figure \ref{fig:lambda_each_year}
can be  attributed to a drop in the intensity
at two areas of Florida: Central Florida ($i=96$) and South East
Florida ($i=105$). Figure \ref{fig:log_intensity_in_and_out_of_sample}
illustrates the whole posterior distribution of the log-intensity
for these locations at the year $2003$, as well as an out-of-sample
posterior predictive distribution for year $2004$.  Both estimates
are kernel density estimates using approximately $1,000$
effective samples. The posterior distributions for the VB and MCMC
are similar, and it is evident that using this large scale model for
forecasting future values is associated with a large uncertainty.

\medskip

Figure~\ref{fig:Spatial_basis_and_weightning} illustrates
 the spatial basis functions and their reweighting over time to produce mean of the variational approximation, as discussed in Section \ref{subsec:structure_variational_posterior}.

\begin{figure}[h]
\centering
\includegraphics[width=0.45\columnwidth]{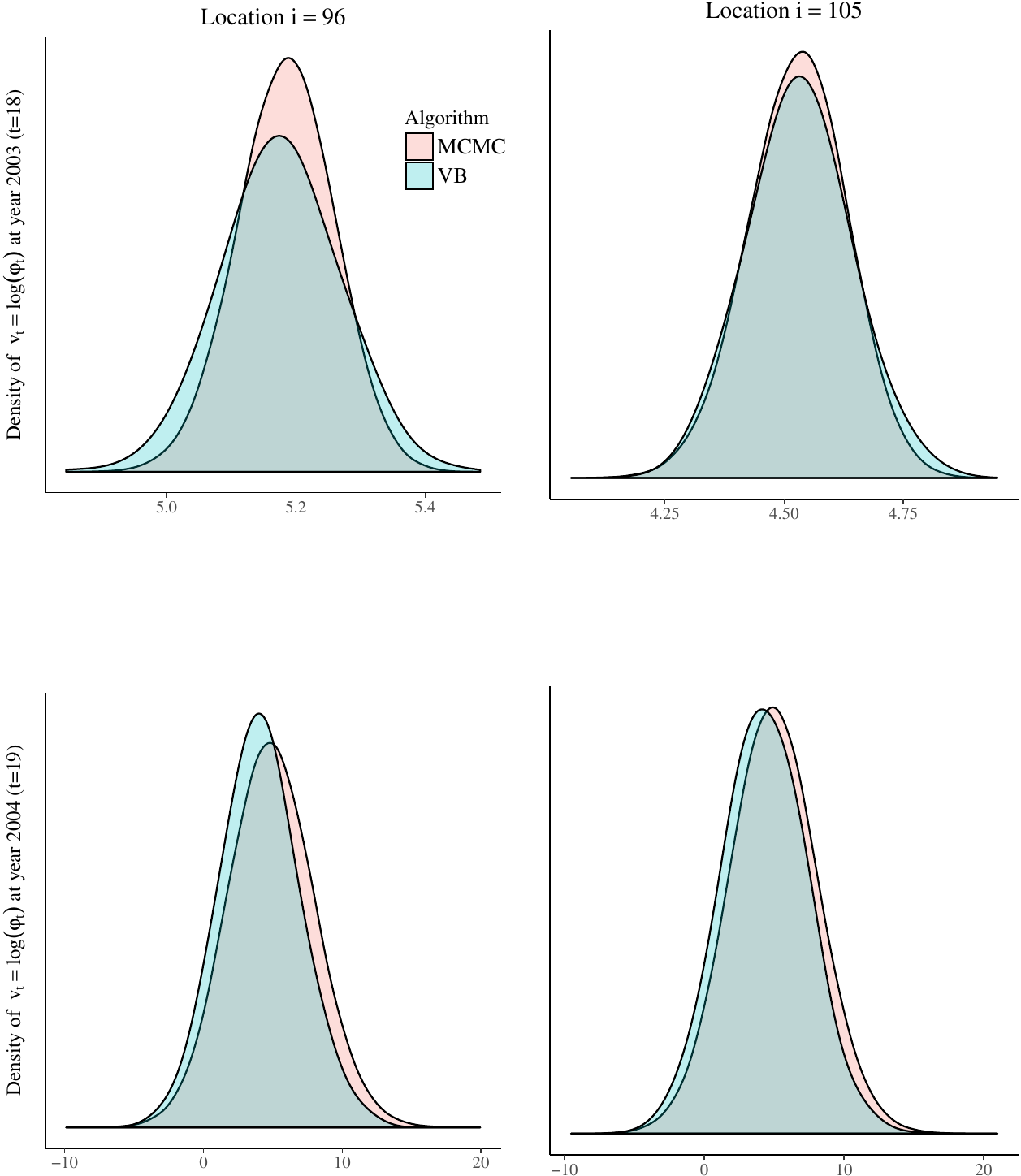}
\caption{Forecasting the log-intensity of the spatial process. The figure
shows an in-sample forecast density of the log-intensity $v_{it}$ for year
2003 ($t=18$, upper panels) and out-of-sample forecast density for year 2004
($t=19$, lower panels) for Central Florida ($i=96$, left panels)
and South East Florida ($i=105$, right panels).}
\label{fig:log_intensity_in_and_out_of_sample}
\end{figure}

\begin{figure}[t]
\centering
\includegraphics[width=0.5\columnwidth]{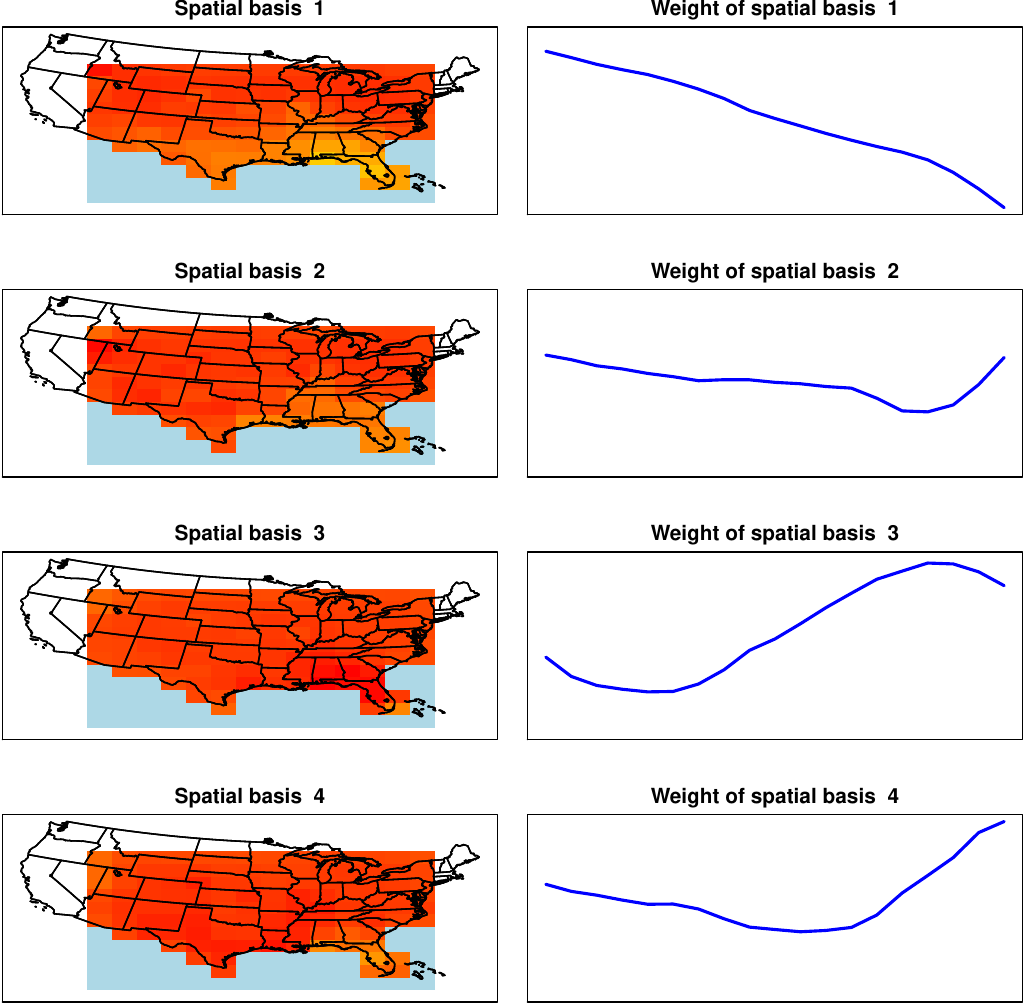}

\caption{Spatial basis representation of the state vector. The figure
shows the Spatial basis functions (left panel), i.e. the $j$th column of $B$, $j=1,\dots,q=4$ and the corresponding weights $\mu_t$ (right panel) through $t = 0, \dots, 18$, that fors $E(X_t) = B\mu_t$.} 
\label{fig:Spatial_basis_and_weightning}
\end{figure}

\section{Application 2: Stochastic volatility modeling}\label{sec:examples2}



\subsection{Model\label{subsec:ModelPhilipovGlickman}}
The  second example considers the  Wishart based multivariate stochastic volatility model proposed
in \citet{philipov+g06} who used it to model the time-varying dependence of a portfolio of
$k$ assets over $T$ time periods. Section~\ref{SS: wishart example} briefly discussed this model.
\medskip

\citet{philipov+g06}
 assume that the return at time period $t$, $t=1, \dots, T$, is the vector
$y_t=(y_{t1},\dots, y_{tk})^\top$, with
\begin{eqnarray*}
\label{eq:MV_SV_model}
y_t & \sim & \mathcal{N}(0, \Sigma_t); \quad \Sigma_t \in\mathbb{R}^{p\times p} ;\\
\Sigma_t^{-1} &\sim & \mathrm{Wishart}(\nu,S_{t-1}); \quad S_t=\frac{1}{\nu}  H(\Sigma_t^{-1})^d H^\top ; \, S_t \in\mathbb{R}^{p\times p};\, \nu > k; \, 0 < d < 1;
\end{eqnarray*}
$H$ is a lower triangular Cholesky factor of a positive definite matrix $A$, with  $A=HH^\top \in \mathbb{R}^{p\times p}$; and $\Sigma_0$ is assumed known. The prior for $A$ is inverse Wishart,
i.e. $A^{-1}\sim \mathrm{Wishart}(\gamma_0,Q_0)$, with $\gamma_0=k+1$ and  $Q_0=I$;
a uniform prior on $[0,1]$ for
$d$, i.e.  $d\sim U[0,1]$; and a shifted gamma prior for $\nu$, i.e. $\nu-k\sim \mathrm{Gamma}(\alpha_0,\beta_0)$. The joint posterior density for $(\Sigma,A,\nu-k,d)$ is
\begin{align}
 p(\Sigma,A,\nu-k,d|y) & \propto p(A,d,\nu-k) \prod_{t=1}^T p(\Sigma_t|\nu,S_{t-1})p(y_t|\Sigma_t);\label{jointpost}
\end{align}
$p(A,d,\nu-k)$ is the joint prior density for $(A,d,\nu-k)$; $p(\Sigma_t|\nu,S_{t-1},d)$ is the conditional inverse Wishart prior
for $\Sigma_t$ given $\nu$, $S_{t-1}$; and $d$, and $p(y_t|\Sigma_t)$ is the normal density for $y_t$ given $\Sigma_t$.

\medskip

We write
$C_t$ for the Cholesky factor of $\Sigma_t$ and
we reparameterize the posterior in terms of the unconstrained parameter vector
$$\theta =(\mathrm{vech}(H')^\top,d',\nu',\mathrm{vech}(C_1')^\top,\dots, \mathrm{vech}(C_T')^\top)^\top; $$
where
\begin{eqnarray*}
C_t'  \in  \mathbb{R}^{k \times k}; & ~ & C'_{t,ij}=C_{t,ij}; \, i\neq j, \text{ and } C'_{t,ii}=\log C_{t,ii};\\
H'  \in  \mathbb{R}^{k \times k}; & ~ & H'_{ij}=H_{ij}; \, i\neq j, \text{ and } H_{ii}=\log H_{ii};
\end{eqnarray*}
with $d'=\log d/(1-d)$ and $\nu'=\log (\nu-k)$. Section~\ref{app:LogPosteriorPhilipovGlickman} shows that
\begin{align}
\label{eq:PosteriorDistribution_PhilipovGlickman}
  p(\theta|y) \propto & |L_k (I_{k^2}+K_{k,k})(H\otimes I_k)L_k^\top| \times \left\{\prod_{t=1}^T |L_k(I_{k^2}+K_{k,k})(C_t\otimes I_k)L_k^\top| \right\} \times (\nu-k) \\ \nonumber
 & \times d(1-d)\times \left\{\prod_i H_{ii}\right\}\left\{\prod_{t=1}^T \prod_{i=1}^k C_{t,ii}\right\} \times p(A,d,\nu-k)\left\{\prod_{t=1}^T p(\Sigma_t|\nu,S_{t-1},d) p(y_t|\Sigma_t)\right\};
\end{align}
Section \ref{supp:notation} of the supplement defines the elimination matrix $L_k$ and the commutation matrix $K_{k,k}$;
Section \ref{app:GradientLogPosteriorPhilipovGlickman} of the supplement shows how to evaluate the gradient of the log posterior.

\subsection{Evaluating the predictive
performance of the variational approximation\label{subsec:data_PhilipovGlickman}}
\citet{philipov+g06}
 develop an MCMC algorithm to estimate their Wishart based multivariate stochastic volatility model.
\citet{rinnergschwenter+tw12} point out that the Gibbs sampler developed by \citeauthor{philipov+g06} contains a mistake
which affects all the full conditionals.  We find that
implementing the \lq corrected\rq{}  version of their algorithm
results in a very inefficient sampler even for  the $k = 5$ portfolios used
by \citeauthor{philipov+g06} in their empirical example.
This means that the \lq corrected\rq{} \citeauthor{philipov+g06} algorithm
cannot be used as a \lq gold standard\rq{} to compare against the variational approximation
 results.  Although it
 may be possible to estimate the posterior of \citeauthor{philipov+g06}'s model using particle methods,
 we do not pursue this here. Section~\ref{subsec:ProblemsMCMC_PhilipovGlickman} of the supplement
illustrates the inefficiency of the corrected \citet{philipov+g06} sampler and explains its problems.

\medskip

We now show empirically (by simulation)
 that the variational posterior provides useful predictive inference. Since MCMC is unavailable,
the  GVA  is benchmarked against an oracle predictive approach, which assumes the
the static model parameters are known. We use a bootstrap particle filter \citep{gordon1993novel}
to obtain the posterior density of the state-vector at $t = T$;  it is then possible to
obtain the one-step ahead oracle predictive density $p(y_{T+1}|y_{1:T}, \zeta^{\mathrm{true}})$,
where $\zeta^{\mathrm{true}}$ denotes the true static model parameters.
The variational predictive density is then benchmarked against the oracle predictive density; we note that
the variational predictive density  integrates over the variational posterior of
all the parameters, including the static model parameters.

\medskip

Section~\ref{subsec:OraclePredictive} of the supplement
shows how to simulate from the oracle predictive density. Section~\ref{subsec:VariationalPredictive} of the supplement
 shows how to simulate from the variational predictive density. The  one-step ahead prediction is repeated
 for $H=4$ horizons. At horizons $h=1,\dots, H$, both filtering densities are based on $y_{1:T+h-1}$ and
 the optimization for finding the variational posterior for $h>1$ is fast since the
 variational parameters are initialized (except the ones added at $T+h$) at their variational mode from the previous optimization.

\medskip

\subsection{Variational approximations of the posterior distribution}
Since this example does not include a high-dimensional auxiliary variable, we use the low-rank state (LR-S) parameterization combined with both a low-dimensional state mean (LD-SM) and a high-dimensional state mean (HD-SM).
As in the previous example, it is straightforward to deduce conditional independence relationships in \eqref{eq:PosteriorDistribution_PhilipovGlickman} to build the Cholesky factor
$C_{2}$ of the precision matrix $\Omega_{2}$ of $\zeta$ in Section~\ref{sec:Gaussian_Variational_high_dim_state};
this section also outlines how to construct
 the Cholesky factor $C_{1}$ of the
precision matrix $\Omega_{1}$ of $z$. Massive parsimony is achieved in this application.
 In particular, for $k=12$ assets,  the saturated Gaussian variational approximation has $31,059,020$ parameters, while our parameterization has $10,813$. For $k=5$, the saturated case has $1,152,920$ parameters and our parameterizations has $4,009\text{-}5,109$. See Section~\ref{app:Sparsity} of the supplement for more details.

\medskip

For all variational approximations we let $q=4$ and
perform $10,000$ iterations of a stochastic optimization algorithm with
learning rates chosen adaptively according to the ADADELTA approach
\citep{Zeiler2012}. We initialize $B$ and $C$ as
unit diagonals and choose $\mu$ and $D$ randomly.
Figure~\ref{fig:ELBOs_PhilipovGlickman} monitors the estimated ELBO for both parameterizations,
using both the gradient estimators in \citeauthor{Roeder2017} and the alternative standard ones
which do not cancel terms that have zero expectation. For $k=5$, the figure shows that the different gradient
estimators perform equally well. Moreover, slightly more variable estimates are observed
in the beginning for the low-dimensional state mean parameterization compared to that of the
high-dimensional mean. Table~\ref{tab:ELBO_final_iteration} presents estimates of ${\cal L}(\lambda)$
at the final iteration using $100$ Monte Carlo samples and also presents the relative CPU times of the
algorithms. In this example, the separate state mean present in the high-dimensional state mean seems
to improve the ELBO considerably.

\begin{figure}[h]
\centering

\includegraphics[width=0.6\columnwidth]{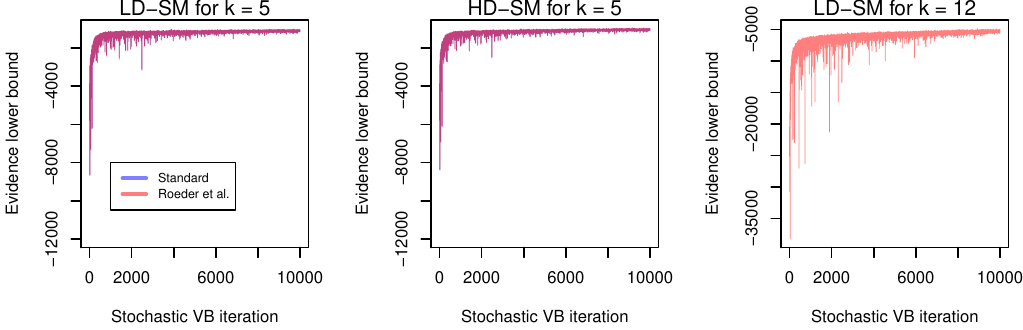}
\caption{${\cal L}(\lambda)$ for the variational approximations in the Wishart process example. The figure shows the estimated value of ${\cal L}(\lambda)$ vs iteration number using a low-dimensional state mean / high-dimensional state mean (LD-SM / HD-SM) with the gradient estimator in \citet{Roeder2017} or the standard estimator. The left and middle panels are
for $k=5$; the right panel is for the real data with $k=12$.}
\label{fig:ELBOs_PhilipovGlickman}
\end{figure}

\begin{figure}[h!]
\centering
\includegraphics[width=0.9\columnwidth]{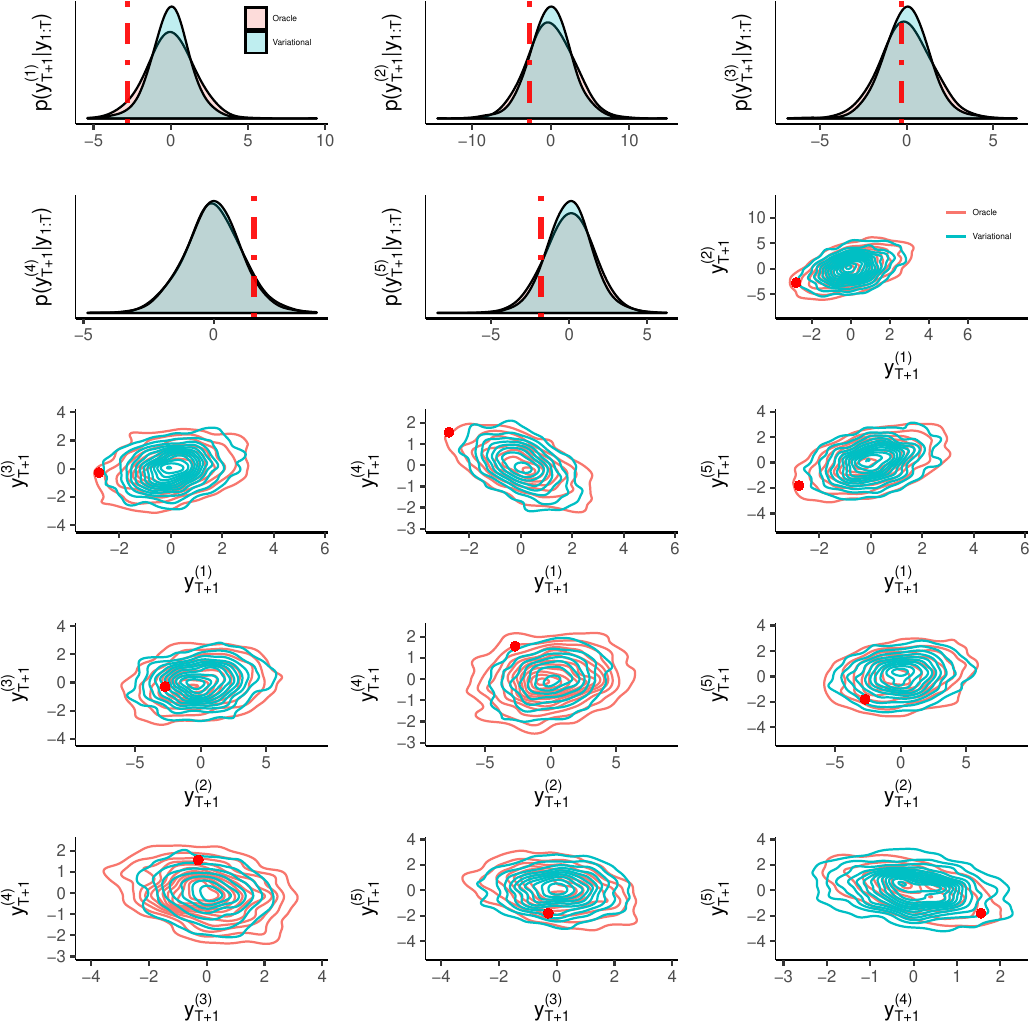}
\caption{Multivariate stochastic volatility model with simulated data and $T=100$. The top row  and the two panels from the
left  of the
second row  show the marginal one-step-ahead kernel density estimates of the predictive density for each of the $k=5$ variables
for both the variational approximation and the oracle; the test observation is the red line.
The right panel of the second row and the rest of the panels
 show the contour plots of the kernel density estimates of the one-step-ahead bivariate predictive densities
for the variational approximation and the oracle; the red dot is the test observation.}
\label{fig:OracleVsVariational}
\end{figure}

\subsection{Results for simulated data\label{subsec:Results_PhilipovGlickman}}
We now assess  the variational approximation  by comparing its out-of-sample predictive properties
against the oracle predictive density.  See  Section~\ref{subsec:data_PhilipovGlickman}.  The comparison is based
on data generated by the multivariate stochastic volatility model with $d=0.2$, $\nu = 20$ and  $A$ generated from
$ \sim \mathrm{Inv\textnormal{-}Wishart}(5, \mathrm{diag}(5))$.
While  the reported results are for a particular simulated dataset due to space restrictions,
we have verified that the same performance is obtained when the random number seed is changed and
$d$ and $\nu$ are varied. Figure~\ref{fig:OracleVsVariational} shows the kernel density estimates
for the marginals of all five
parameters and bivariate kernel density estimates for all pairs of variables for the
predictive $p(y_{T+1}|y_{1:T})$ (variational
and oracle) for $T= 100$.
The figure also shows the test observation (withheld when estimating the variational predictive and the oracle predictive).
Figure \ref{fig:OracleVsVariational_boxplots} shows boxplots of draws from all marginals of the
predictive densities $p(y_{T+h}|y_{1:T + h - 1})$  (variational and oracle) for the horizons $h = 1, 2, 3, 4$.
This figure also shows the withheld test observation which is within the prediction intervals of both methods.
Figure \ref{fig:OracleVsVariational_portfolios} shows, for each of the $H=4$ horizons, future predictions
(variational and oracle) of an equally weighted portfolio $w_{T+h} = \sum_{k=1}^5 (1/5)y_{(T+h)k}$ conditional
on the posteriors using the data $y_{1:(T + h -1)}$.
Section~\ref{subsec:VariationalPredictiveEvaluationsAppendix} of the supplement
gives more  plots that further confirm the accuracy of the variational predictive densities.

\begin{figure}[h!]
\centering
\includegraphics[width=0.9\columnwidth]{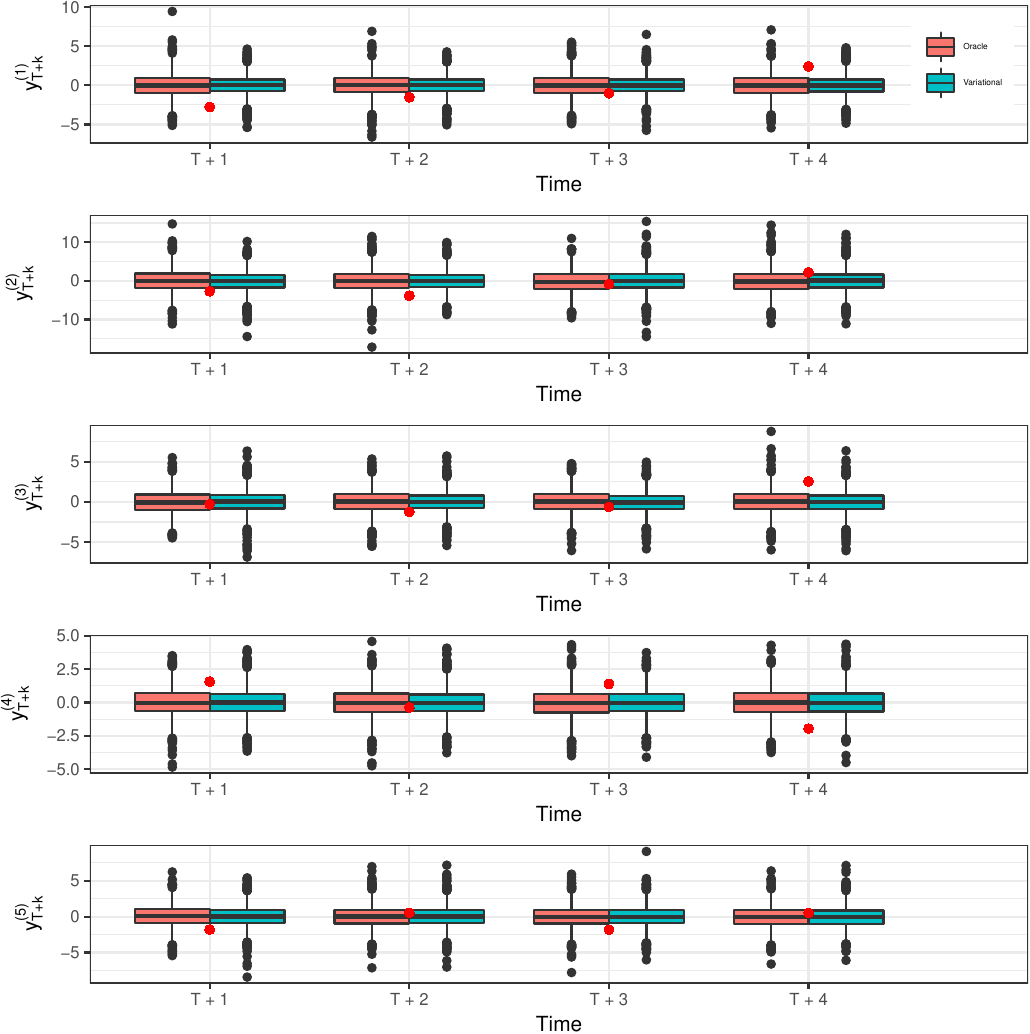}
\caption{Simulated data. Boxplots of samples from the variational one-step ahead marginal predictive densities compared against the oracle predictive densities  with $T=100,101,102,103$. The figure also shows the test observation (red) dot for each $T$ and variable.}
\label{fig:OracleVsVariational_boxplots}
\end{figure}

\begin{figure}[h!]
\centering
\includegraphics[width=0.9\columnwidth]{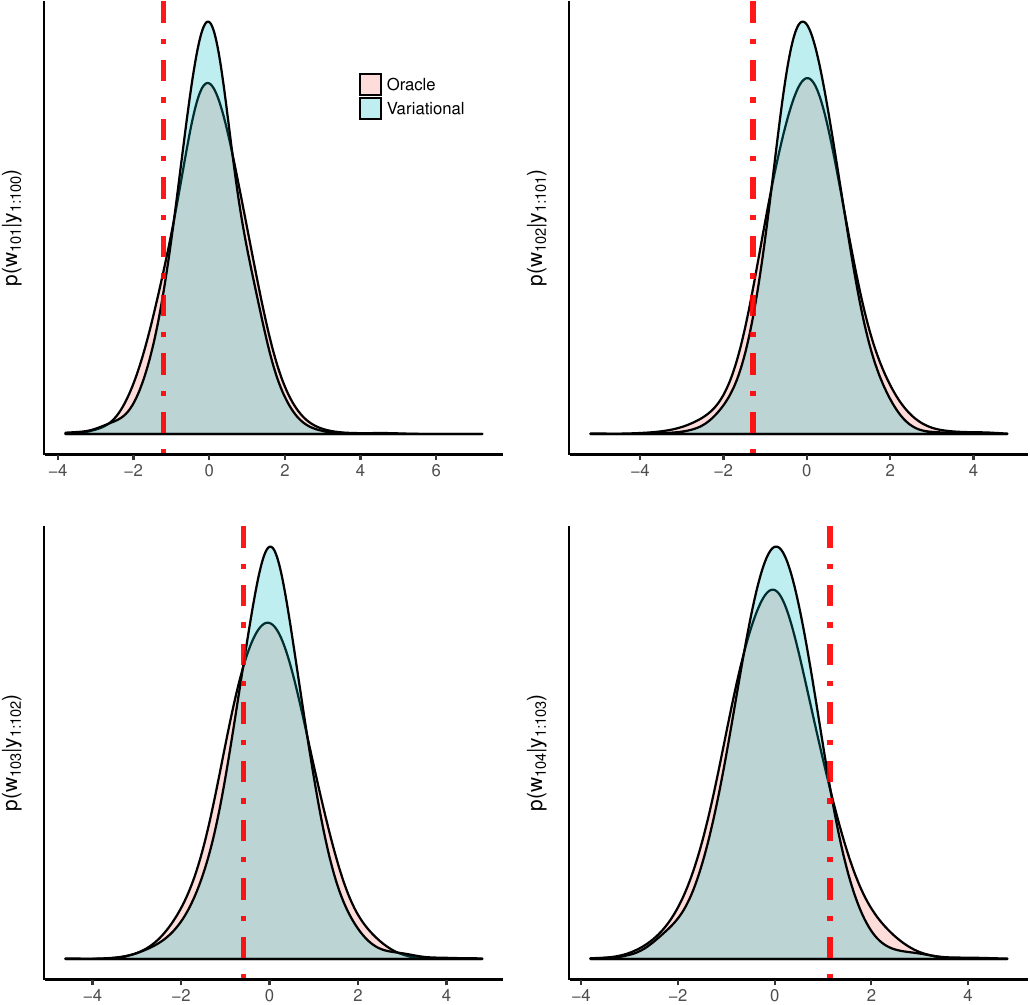}
\caption{Simulated data. Kernel density estimates of the one-step ahead predictive densities of a equally weighted
 portfolio  of assets. The results are for $T = 100, 101, 102, 103$. The figure also shows the test observation (red line) for each $T$.}
\label{fig:OracleVsVariational_portfolios}
\end{figure}

\subsection{Real data results\label{subsec:Results_PhilipovGlickman}}
The data consists of $T=100$ monthly observations on all $k=12$ (\citet{philipov+g06} only consider $k = 5$ assets and report an acceptance probability close to zero when $k=12$ for their sampler) value-weighted portfolios from the 201709 CRSP database, for the period 2009-06 to 2017-09. The portfolios are: consumer non-durables, consumer durables, manufacturing, energy, chemicals, business equipment, telecom, utilities, retail/wholesale, health care, finance, other. With $k=12$ the dimension of the state vector is  $p=78$. We follow \citet{philipov+g06} and prefilter each series using an AR(1) process.

The right panel in Figure~\ref{fig:ELBOs_PhilipovGlickman}  shows the estimated ELBO
on a variational optimization using the real dataset.
While the estimated ELBO plot  is more variable than for the $k=5$ case, it settles down eventually.
Figure~\ref{fig:Variational_real_data_k12} shows the in-sample prediction of $\tilde{y}_{100}$ given $y_{1:100}$, together with the observed data point,
for some of the assets. The figure also shows an in-sample prediction of a portfolio consisting of equally weighted assets. The variational posterior for the real data example uses the low-dimensional state mean parameterization.

\begin{figure}[h!]
\centering
\includegraphics[width=0.9\columnwidth]{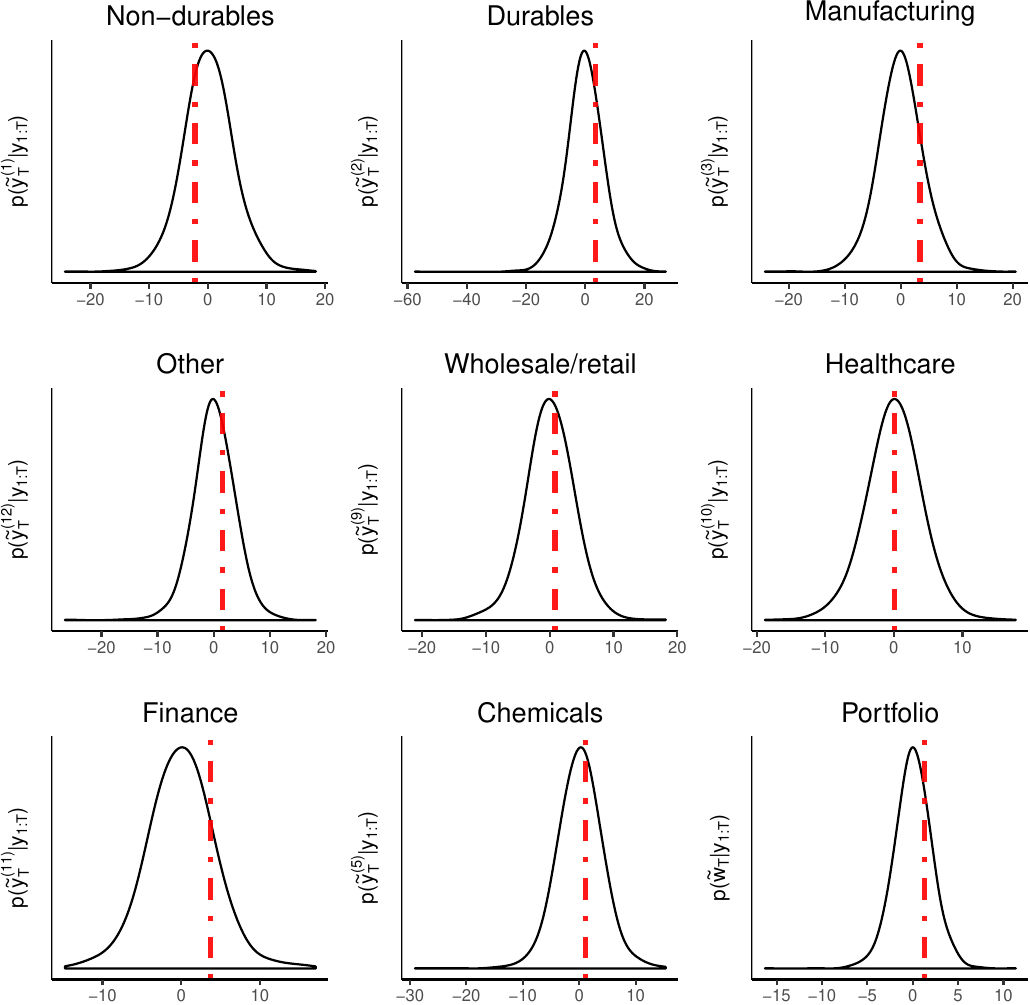}
\caption{Multivariate stochastic volatility model: Real data. Kernel density estimates of the in-sample predictive density for some the  assets and a equally weighted portfolio of  assets. The figure also shows the in-sample observation (red line).}
\label{fig:Variational_real_data_k12}
\end{figure}

\section{Discussion}

The article proposes  a  Gaussian variational approximation method
for high-dimensional state space models.
Dimension reduction in the variational approximation is achieved through a dynamic factor structure for the variational covariance matrix.
The factor structure reduces
the dimension in the description of the states, whereas the Markov dynamic structure for the factors achieves parsimony in describing the temporal
dependence.  We  show that the method works well in two challenging models. The first is an extended example for a spatio-temporal data set describing
the spread of the Eurasian collared-dove throughout North America. The second is a multivariate stochastic volatility model in
which the state vector is high dimensional.

Perhaps the most obvious limitation of our current work is the restriction to a Gaussian approximation, which does not allow capturing skewness or heavy tails
in the posterior distribution.  However, Gaussian variational approximations can be used as building blocks for more complex approximations based
on normal mixtures, copulas or conditionally Gaussian families for
example \citep{Han2016,Miller2016,smith+ln19,tan+bn19} and these more complex variational families can overcome some of the limitations
of the simple Gaussian approximation.  We intend to consider this in future work.

\section*{Acknowledgements}

We thank Mevin Hooten for his help with the Eurasian collared-dove data.
We thank Linda Tan for her comments on an early version of this manuscript. Matias Quiroz and Robert Kohn were partially supported
by Australian Research Council Center of Excellence grant CE140100049. David Nott was supported by a Singapore Ministry of Education Academic Research Fund Tier 2 grant (MOE2016-T2-2-135).


\newpage
\bibliographystyle{apalike}
\addcontentsline{toc}{section}{\refname}
\bibliography{ref}

\setcounter{equation}{0} 
\renewcommand{\theequation}{A\arabic{equation}}
\renewcommand{\thelemma}{A\arabic{lemma}}
\renewcommand{\thedefinition}{A\arabic{definition}}
\renewcommand{\theremark}{A\arabic{remark}}

\begin{appendices}
\section{Gradient expressions of the evidence lower bound}\label{app:grad_expressions_lemmas}
\subsection{Notation and definitions}
We consider any vector $x \in \mathbb{R}^n$ to be arranged as a column vector with $n$ elements, i.e. $x=(x_1,\dots,x_n)^\top$. Likewise, if $g$ is function whose output is vector valued, i.e.  $g(x) \in \mathbb{R}^m$, then $g(x)=(g_1(x),\dots, g_m(x))^\top$. For a matrix $A$, $\mathrm{vec}(A)$ is the vector obtained by stacking the columns of $A$ from left to right.

\begin{definition}
\begin{enumerate}[topsep=0pt, label={\emph{(\roman*)}}]
\item
Suppose that $g : \mathbb{R}^n \to \mathbb{R}$ is a scalar valued function of a vector valued argument $x$. Then $\nabla_x g $ is a column vector with
 $i$th element $\partial g/\partial x_i$.
\item
Suppose that $g : \mathbb{R}^n \to \mathbb{R}^m$ is a vector valued function of a vector valued argument $x$. Then $dg/dx$ is a $ m \times n $
matrix with $(i,j)$th element $\mbox{$\partial g_i$}/\mbox{$\partial x_j$}$.
\item
Suppose that $g : \mathbb{R}^{m\times n} \to \mathbb{R}$ is a scalar valued function of a $m \times n $ matrix  $A= (a_{ij})$. Then
$\nabla_A g$ is an $ m \times n $ matrix with $(i,j)$th element $\mbox{$\partial g$}/\mbox{$\partial a_{ij}$}$.
\item
Suppose that $G : \mathbb{R}^{m\times n} \to \mathbb{R}^{q\times r}$ is a matrix valued function of a matrix valued argument $A$. Then, $$\frac{dG}{dA}  \coloneqq \frac{d\mathrm{vec}(G)}{d\mathrm{vec}(A)}$$
is an $mq \times nr$ matrix with $(i,j)$th element $\mbox{$\partial \mathrm{vec}(G)_i$}/\mbox{$\partial \mathrm{vec}(A)_j$}$.
\end{enumerate}
\end{definition}

\begin{remark}
If $g$ is a scalar function of a vector valued argument $x$, then Part (ii) (with $m=1$) implies that $dg/dx$ is a row vector. Hence, $\nabla_X g = (dg/dx)^\top$.
\end{remark}
We write $0_{m\times n}$ for the $m\times n$ matrix of zeros, $\otimes$ for the
Kronecker product and $\odot$ for the Hadmard (elementwise) product which can be applied to two matrices of the same dimensions. For an $m\times n$ matrix $A$, $\mathrm{vec}(A)$ is the vector obtained by stacking the columns of $A$ from left to right. We also write $K_{r,s}$ for the commutation
matrix, of dimensions $rs\times rs$, which for an $r\times s$ matrix $Z$ satisfies
$$K_{r,s}\mathrm{vec}(Z)=\mathrm{vec}(Z^\top).$$

\subsection{Results\label{ss: results app}}
We adopt the notation from Section \ref{subsec:structure_variational_posterior} and
construct the variational distribution for $\theta$ through
\begin{align*}
  \theta & = \left[\begin{array}{cc} X \\ \zeta \end{array}\right] = \left[\begin{array}{cc} I_{T+1}\otimes B & 0 \\ 0 & I_P\end{array}\right]\rho + \left[\begin{array}{c} \epsilon \\ 0 \end{array}\right];
\end{align*}
$P = \dim(\zeta)$ and $\epsilon=(\epsilon_0^\top,\dots,\epsilon_T^\top)^\top$; $\epsilon_t$ is defined in \eqref{ldsm}.
Apply the reparameterization trick for the LD-SM parameterization
(see the discussion in Section~\ref{subsec:structure_variational_posterior}) and write
\begin{align}
\theta & = W\mu+WC^{-\top}\omega+Ze;  \label{reparformula2}
\end{align}
where $\omega\sim \mathcal{N}(0,I_{q(T+1)})$;
$$W=\left[\begin{array}{cc} I_{T+1}\otimes B & 0_{p(T+1)\times P} \\ 0_{P\times q(T+1)} & I_P \end{array}\right], \;\;\;\;Z=\left[\begin{array}{cc} D & 0_{p(T+1)\times P} \\ 0_{P\times p(T+1)} & 0_{P\times P} \end{array}\right],\;\;\;\; e=\left[\begin{array}{c} \epsilon \\ 0_{P\times 1} \end{array}\right]; $$
and $D$ is a diagonal matrix with diagonal entries $(\delta_0^\top,\dots,\delta_T^\top)^\top$.
Then, the distribution of $(\omega, \epsilon) \sim \mathcal{N}(0,I_{(p+q)(T+1)+P})$  does
 not depend on the variational parameter $\lambda$.

 \medskip
 Lemmas~\ref{lem:standard_gradient} and \ref{lem:Roeder_gradient} give the gradients of the ELBO in \eqref{lbqr} using the reparameterization trick. These can be used for unbiased gradient estimation of the lower bound by sampling one or more samples from $(\omega, \epsilon) $. Lemma \ref{lem:standard_gradient} (\ref{lem:Roeder_gradient}) contains the gradients corresponding to \eqref{lbqr2} (\eqref{lbq2}), which we refer to as the standard gradient (the \citeauthor{Roeder2017} gradient).
  By    the discussion in Section~\ref{sec:stochasticgradientvariational}, the \citeauthor{Roeder2017} gradient has the property that a Monte Carlo estimate of the gradient based on (\ref{sgradmuss}) using a single sample is zero when the variational posterior is exact.
The proofs of the lemmas are in Section~\ref{app:GradientVariationalApprox} of the supplement.

\begin{lemma}[Standard gradient]\label{lem:standard_gradient}
Let $\mathcal{L}(\lambda)=
E_{(\omega, \epsilon)  }\left(\log h(\theta) - \log q_\lambda(\theta) \right),
$
with $\theta$ in \eqref{reparformula2}, $(\omega, \epsilon)$ as above,
 and $q_\lambda(\theta) = \mathcal{N}\left(\theta|W\mu, W \Sigma W^\top + Z^2\right).$ If  $h(\theta)$ is differentiable,
 then,
\begin{enumerate}[topsep=0pt, label={\emph{(\roman*)}}]
\item \begin{align}
  \nabla_\mu {\cal L}(\lambda) & = W^\top E_{(\omega, \epsilon) }(\nabla_\theta \log h(W\mu+WC^{-\top}\omega+Ze));  \label{gradmuss}
\end{align}
\item \begin{align}
  \nabla_{\mathrm{vec}(B)} {\cal L}(\lambda) & = T_{1B}+T_{2B}+T_{3B}, \label{gradBss}
\end{align}
where
\begin{align*}
T_{1B} & = \left\{\frac{dW}{dB}\right\}^\top E_{(\omega, \epsilon) }(((\mu+C^{-\top}\omega)\otimes I_{p(T+1)+P}) \nabla_\theta \log h(W\mu+WC^{-\top}\omega+Ze)),\\
  T_{2B} & = \left\{\frac{dW}{dB}\right\}^\top \mathrm{vec}((W\Sigma W^\top+Z^2)^{-1}W\Sigma ),\\
 T_{3B} & = \left\{\frac{dW}{dB}\right\}^\top E_{(\omega, \epsilon)}\left(\mathrm{vec}\left((W\Sigma W^\top+Z^2)^{-1}(WC^{-\top}\omega+Ze)\omega^\top C^{-1}\right.\right. \\
 & \left.\left.-(W\Sigma W^\top+Z^2)^{-1}(WC^{-\top}\omega+Ze)(WC^{-\top}\omega+Ze)^\top (W\Sigma W^\top+Z^2)^{-1}W\Sigma\right)\right);
\end{align*}
above,
\begin{align*}
   \frac{dW}{dB} = & (Q_1^\top\otimes P_1)\left[\left\{(I_{T+1}\otimes K_{q,(T+1)})(\mbox{$\mathrm{vec}(I_{T+1})$}\otimes I_q)\right\}\otimes I_p\right],
\end{align*}
with
$$P_1=\left[\begin{array}{c} I_{p(T+1)} \\ 0_{P\times p(T+1)} \end{array}\right],\;\;\;\;
Q_1=\left[\begin{array}{cc} I_{q(T+1)} & 0_{q(T+1)\times P} \end{array}\right];$$

\item \begin{align}
 \nabla_\delta {\cal L}(\lambda) & = E_{(\omega, \epsilon)}(\mathrm{diag}(\nabla_X \log h(W\mu+WC^{-\top}\omega+Ze)\epsilon^\top+(W_1\Sigma_1 W_1^\top+D^2) ^{-1}D \nonumber \\
 &  +(W_1\Sigma_1 W_1^\top+D^2)^{-1}(W_1 C_1^{-\top}\omega_1+D\epsilon)\epsilon^\top \nonumber \\
 & -(W_1\Sigma_1 W_1^\top+D^2)^{-1}(W_1C_1^{-\top}\omega_1+D\epsilon)(W_1C_1^{-\top}\omega_1+D\epsilon)^\top (W_1\Sigma_1W_1^\top+D^2)^{-1}D)), \label{graddss}
\end{align}
where $W_1=I_{T+1}\otimes B$;
\item \begin{align}
\nabla_C {\cal L}(\lambda) = & E_{(\omega, \epsilon) }\left( -C^{-\top}\omega \nabla_\theta \log h(W\mu+WC^{-\top}\omega+Ze)^\top WC^{-\top}\right. \nonumber \\
  - & \Sigma W^\top (W\Sigma W^\top+Z^2)^{-1} WC^{-\top} \nonumber \\
  - & C^{-\top}\omega(WC^{-\top}\omega+Ze)^\top (W\Sigma W^\top+Z^2)^{-1}WC^{-\top} \nonumber \\
  + & \left. \Sigma W^{\top}(W\Sigma W^\top+Z^2)^{-1}(WC^{-\top}\omega+Ze)(WC^{-\top}\omega+Ze)^\top (W\Sigma W^\top+Z^2)^{-1}WC^{-\top}\right) \label{gradCss}
\end{align}
\end{enumerate}
\end{lemma}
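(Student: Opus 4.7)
The plan is to substitute the reparametrization $\theta = W\mu + WC^{-\top}\omega + Ze$ into $\mathcal{L}(\lambda) = E_f(\log h(\theta) - \log q_\lambda(\theta))$ and differentiate under the integral sign. Since $q_\lambda$ is Gaussian with mean $W\mu$ and covariance $M := W\Sigma W^\top + Z^2$, and since $\theta - W\mu = v$ with $v := WC^{-\top}\omega + Ze$ does not involve $\mu$, one has $-\log q_\lambda(\theta) = \tfrac{1}{2}\log|M| + \tfrac{1}{2}v^\top M^{-1}v$ up to a constant in $\lambda$. This observation immediately yields \eqref{gradmuss}: only $\log h(\theta)$ depends on $\mu$, and $d\theta/d\mu = W$ gives $\nabla_\mu \log h(\theta) = W^\top\nabla_\theta \log h(\theta)$.

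For part (ii), three pieces arise from differentiating $\log h(\theta)$, $\tfrac{1}{2}\log|M|$, and $\tfrac{1}{2}v^\top M^{-1}v$ with respect to $B$. For $T_{1B}$ I use $d\theta = (dW)\rho$ with $\rho = \mu + C^{-\top}\omega$ together with $\mathrm{vec}(ab^\top) = b\otimes a$ to obtain $\mathrm{vec}(\nabla_W \log h(\theta)) = (\rho\otimes I)\nabla_\theta \log h(\theta)$. For $T_{2B}$ I use $d\log|M| = \mathrm{tr}(M^{-1}dM) = 2\,\mathrm{tr}(M^{-1}W\Sigma(dW)^\top)$, exploiting symmetry of the two halves of $dM = (dW)\Sigma W^\top + W\Sigma(dW)^\top$. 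For $T_{3B}$ I expand $d(v^\top M^{-1}v) = 2v^\top M^{-1}dv - v^\top M^{-1}(dM)M^{-1}v$ with $dv = (dW)C^{-\top}\omega$ and collect cyclic rearrangements as traces of the form $\mathrm{tr}(X(dW)^\top)$, from which $\nabla_W = X$. Composing with the chain-rule factor $\nabla_{\mathrm{vec}(B)} = (dW/dB)^\top \nabla_{\mathrm{vec}(W)}$ yields the three stated pieces. The closed form of $dW/dB$ follows from the standard Kronecker identity for $\mathrm{vec}(I_{T+1}\otimes B)$ involving the commutation matrix $K_{q,T+1}$, padded by the projections $P_1, Q_1$ to account for the zero blocks of $W$.

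For part (iii) I exploit that $W$, $\Sigma$, and $\omega$ are free of $\delta$ and that only $Z = \mathrm{diag}(D,0)$ with $D = \mathrm{diag}(\delta)$ depends on $\delta$. Because of the block structure $M = \mathrm{diag}(W_1\Sigma_1 W_1^\top + D^2,\,\Sigma_2)$, all surviving terms localize to the first block. The log-det piece uses $dM_1 = 2D\,dD$ together with the elementary identity $\mathrm{tr}(A\,dD) = (\mathrm{diag}\,A)^\top d\delta$; the quadratic-form piece uses $dv_1 = dD\cdot \epsilon$ in the same way; and the $\log h$ piece contributes the first (diagonal) term via $d\theta/d\delta$ acting only on the $X$-coordinates. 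For part (iv) the key identity is $dC^{-\top} = -C^{-\top}(dC)^\top C^{-\top}$, which yields $dv = -WC^{-\top}(dC)^\top C^{-\top}\omega$ and $d\Sigma = -C^{-\top}(dC)^\top\Sigma - \Sigma(dC)C^{-1}$. Each of the four lines of \eqref{gradCss} emerges by cyclic and transpose manipulations into the canonical form $\mathrm{tr}(X(dC)^\top)$, whose gradient in $C$ is $X$; the two halves of $d\Sigma$ produce equal scalar contributions by symmetry of $\Sigma$ and $M^{-1}$, which accounts for the observed prefactors.

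The main obstacle is part (ii): identifying $dW/dB$ with its Kronecker and commutation-matrix factors, and ensuring that the zero-padded projections $P_1,Q_1$ correctly restrict to the block of $W$ where $B$ appears. The other three parts reduce to routine applications of standard matrix-calculus identities once differentiation under the integral sign is justified, which follows from dominated convergence under mild regularity of $h$.
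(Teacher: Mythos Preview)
Your proposal is correct and arrives at all four formulas by a route that is close to, but not identical with, the paper's. The paper organizes the computation for parts (ii) and (iv) by inserting the intermediate variable $WC^{-\top}$ and invoking the factor-covariance gradients of \citet{ong+ns17} (their analogues of \eqref{gradexp1}--\eqref{gradexp2}) as ready-made building blocks; the chain rule is then carried out in vectorized Kronecker/commutation-matrix form, e.g.\ via $dC^{-1}/dC=-(C^{-1}\otimes C^{-\top})$ and $dC^{-\top}/dC^{-1}=K$. You instead work directly in differential/trace notation, reducing every contribution to the canonical shape $\mathrm{tr}(X\,(dW)^\top)$ or $\mathrm{tr}(X\,(dC)^\top)$ and reading off $\nabla_W=X$ or $\nabla_C=X$. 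The two computations are equivalent line by line (your symmetry arguments for the two halves of $dM$ and $d\Sigma$ produce exactly the factors of $2$ that the paper obtains from $(I+K)$), but your route is a little more self-contained since it does not rely on the \citet{ong+ns17} formulas and avoids most of the Kronecker bookkeeping except in the one place it is genuinely needed, namely the closed form of $dW/dB$. The paper's route, on the other hand, makes the connection to the existing factor-model gradients explicit and is more mechanical once those results are in hand. For part (iii) your block-diagonal reduction $M=\mathrm{diag}(W_1\Sigma_1W_1^\top+D^2,\Sigma_2)$ is exactly what the paper uses implicitly, and part (i) is identical in both.
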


\begin{lemma}[\citeauthor{Roeder2017} gradient]\label{lem:Roeder_gradient}
Let $\mathcal{L}(\lambda)=E_{(\omega, \epsilon) }\left(\log h(\theta) - \log q_\lambda(\theta) \right)$,
with $\theta$ defined in \eqref{reparformula2}, ${(\omega, \epsilon) }$ as above and $q_\lambda(\theta) = \mathcal{N}\left(\theta|W\mu, W \Sigma W^\top + Z^2\right).$

If $h(\theta)$ is differentiable, then
\begin{enumerate}[topsep=0pt, label={\emph{(\roman*)}}]
\item \begin{align}
  \nabla_\mu {\cal L}(\lambda) & = W^\top E_{(\omega, \epsilon) }(\nabla_\theta \log h(W\mu+WC^{-\top}\omega+Ze)+(W\Sigma W^\top+Z^2)^{-1}(WC^{-\top}\omega+Ze));  \label{sgradmuss}
\end{align}
\item \begin{align}
  \nabla_{\mathrm{vec}(B)} {\cal L}(\lambda) & = T_{1B}+T_{3B}', \label{sgradBss}
\end{align}
with $T_{1B}$ as in Part (ii) of Lemma \ref{lem:standard_gradient} and
\begin{align}
 T_{3B}' & = \left\{\frac{dW}{dB}\right\}^\top E_{(\omega, \epsilon) }\left(\mathrm{vec}\left((W\Sigma W^\top+Z^2)^{-1}(WC^{-\top}\omega+Ze)(\omega^\top C^{-1}+\mu^\top)\right)\right); \label{T3Bp}
\end{align}
\item \begin{align}
 \nabla_\delta {\cal L}(\lambda) & = E_{(\omega, \epsilon) }(\mathrm{diag}(\nabla_X \log h(W\mu+WC^{-\top}\omega+Ze)\epsilon^\top  \nonumber \\
 & +(W_1\Sigma_1 W_1^\top+D^2)^{-1}(W_1 C_1^{-\top}\omega_1+D\epsilon)\epsilon^\top)), \label{sgraddss}
\end{align}
where $W_1=I_{T+1}\otimes B$;
\item \begin{align}
\nabla_C {\cal L}(\lambda) = & E_{(\omega, \epsilon) }\left( -C^{-\top}\omega \left\{ \nabla_\theta \log h(W\mu+WC^{-\top}\omega+Ze)^\top \right. \right. \nonumber \\
+ & \left. \left. (WC^{-\top}\omega+Ze)^\top (W\Sigma W^\top+Z^2)^{-1} \right\} WC^{-\top}\right).
 \label{sgradCss}
\end{align}
\end{enumerate}

\end{lemma}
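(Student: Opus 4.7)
The plan is to instantiate the Roeder representation \eqref{lbq2} separately for each of the four variational parameters $\mu, B, \delta, C$, treating the $\log h$ and $\log q_\lambda$ terms uniformly. The Jacobian of the reparametrization map $u(\lambda,\omega,\epsilon) = W\mu + WC^{-\top}\omega + Ze$ with respect to $\lambda$ already appears (as a transpose) in the corresponding expression of Lemma~\ref{lem:standard_gradient}, so the contributions arising from $\nabla_\theta \log h$ will coincide with those in the standard gradient. The only new ingredient is the score $\nabla_\theta \log q_\lambda$ evaluated along the reparametrization.

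Since $q_\lambda(\theta) = N(W\mu, W\Sigma W^\top + Z^2)$ and $\theta - W\mu = WC^{-\top}\omega + Ze$, direct differentiation of the Gaussian log-density yields
\[
  \nabla_\theta \log q_\lambda(\theta) = -(W\Sigma W^\top + Z^2)^{-1}(WC^{-\top}\omega + Ze),
\]
and this single formula drives all four parts. Parts (i) and (iii) are then immediate: for $\mu$ we have $du/d\mu = W$, so the Roeder formula gives $W^\top$ multiplied by the expected sum of $\nabla_\theta \log h$ and $-\nabla_\theta \log q_\lambda$, which is \eqref{sgradmuss}. For $\delta$, only the $Ze$ term depends on $\delta$, so contracting $\mathrm{diag}(\epsilon)$ against the $X$-blocks of $\nabla_\theta \log h$ and of the score, combined with the block-diagonal structure of $W\Sigma W^\top + Z^2$ (whose states block inverts to $(W_1\Sigma_1 W_1^\top + D^2)^{-1}$), produces \eqref{sgraddss}.

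For part (ii), I would write $W\mu + WC^{-\top}\omega = Wv$ with $v = \mu + C^{-\top}\omega$, so that $d(Wv)/d\mathrm{vec}(B) = (v^\top \otimes I)(dW/dB)$. Transposing and applying this to $\nabla_\theta \log h$ reproduces $T_{1B}$ from Lemma~\ref{lem:standard_gradient}; applying it to $-\nabla_\theta \log q_\lambda$ and using the elementary identity $(v \otimes I_n)\,c = \mathrm{vec}(cv^\top)$ for any $n \times 1$ vector $c$ recasts the result into the $\mathrm{vec}$ form that appears as $T_{3B}'$ in \eqref{T3Bp}.

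The main obstacle is part (iv): differentiating $WC^{-\top}\omega$ with respect to the entries of $C$ via the identity $d\mathrm{vec}(C^{-\top}) = -(C^{-1}\otimes C^{-\top})K\,d\mathrm{vec}(C)$ (with $K$ a commutation matrix) triggers heavy Kronecker bookkeeping. I would instead work with the scalar differential. Setting $a = \nabla_\theta \log h - \nabla_\theta \log q_\lambda$ and using $d(C^{-\top}) = -C^{-\top}(dC)^\top C^{-\top}$ gives $a^\top du = -a^\top W C^{-\top}(dC)^\top C^{-\top}\omega$; two applications of the cyclic trace identity rewrite this scalar as $\mathrm{tr}(G^\top dC)$ with $G = -C^{-\top}\omega \cdot a^\top W C^{-\top}$. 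Reading off $\nabla_C \mathcal{L} = E_f[G]$ and expanding $a$ yields \eqref{sgradCss}, sidestepping the commutation matrix entirely.
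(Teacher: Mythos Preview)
Your proof is correct, but it takes a genuinely different route from the paper. The paper does not re-derive the gradients: it treats Lemma~\ref{lem:Roeder_gradient} as a corollary of Lemma~\ref{lem:standard_gradient}, simply checking for each part that the Roeder expression and the standard expression differ only by terms whose $f$-expectation vanishes. For instance, for Part~(i) the paper just notes that $E_f\bigl[(W\Sigma W^\top+Z^2)^{-1}(WC^{-\top}\omega+Ze)\bigr]=0$, so \eqref{sgradmuss} reduces to \eqref{gradmuss}; for Part~(ii) it observes that $T_{2B}$ cancels the second piece of $T_{3B}$ in expectation (using $E_f[(WC^{-\top}\omega+Ze)(WC^{-\top}\omega+Ze)^\top]=W\Sigma W^\top+Z^2$), and similarly for Parts~(iii) and~(iv).

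You instead instantiate the Roeder identity \eqref{lbq2} directly, computing the Gaussian score $\nabla_\theta\log q_\lambda=-(W\Sigma W^\top+Z^2)^{-1}(WC^{-\top}\omega+Ze)$ once and contracting it with the Jacobian $du/d\lambda$ for each parameter block. This makes the proof self-contained---it does not need Lemma~\ref{lem:standard_gradient} to be in hand---and your scalar-differential/trace argument for Part~(iv) is a neat way to avoid the commutation-matrix bookkeeping the paper goes through when proving the corresponding piece \eqref{gradCss} of Lemma~\ref{lem:standard_gradient}. On the other hand, the paper's approach makes the relationship between the two estimators transparent: the Roeder estimator is literally the standard one with certain mean-zero control-variate terms excised, which is exactly the interpretation emphasised in Section~\ref{sec:stochasticgradientvariational}.
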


\end{appendices}

\newpage

\renewcommand{\theequation}{S\arabic{equation}}
\renewcommand{\thesection}{S\arabic{section}}
\renewcommand{\thepage}{S\arabic{page}}
\renewcommand{\thetable}{S\arabic{table}}
\renewcommand{\thefigure}{S\arabic{figure}}
\renewcommand{\thedefinition}{S\arabic{definition}}
\renewcommand{\theremark}{S\arabic{remark}}
\setcounter{page}{1}
\setcounter{section}{0}
\setcounter{equation}{0}
\setcounter{table}{0}

Online supplement to \lq Gaussian variational approximations for high-dimensional state-space models\rq.

We refer to equations, sections, etc in the main paper as (1), Section 1, etc, and in the supplement as (S1), Section S1, etc.



\section{Notation and definitions}\label{supp:notation}
To make the supplement easier to follow we repeat some of the material
in Appendix~\ref{app:grad_expressions_lemmas} and add to it the new notation needed in this supplement.
We consider any $n$ dimensional $x$ to be  a column vector, i.e. $x=(x_1,\dots,x_n)^\top$; thus, if $g$ is function whose output is  $m$ dimensional, then $g(x)=(g_1(x),\dots, g_m(x))^\top$.
For a matrix $A$, $\mathrm{vec}(A)$ is the vector obtained by stacking the columns of $A$ from left to right.
\begin{definition}
\begin{enumerate}[topsep=0pt, label={\emph{(\roman*)}}]
\item
Suppose that $g : \mathbb{R}^n \to \mathbb{R}$ is a scalar valued function of a vector valued argument $x$. Then $\nabla_x g $ is a column vector with
 $i$th element $\partial g/\partial x_i$.
\item
Suppose that $g : \mathbb{R}^n \to \mathbb{R}^m$ is a vector valued function of a vector valued argument $x$. Then $dg/dx$ is a $ m \times n $
matrix with $(i,j)$th element $\mbox{$\partial g_i$}/\mbox{$\partial x_j$}$.
\item
Suppose that $g : \mathbb{R}^{m\times n} \to \mathbb{R}$ is a scalar valued function of a $m \times n $ matrix  $A= (a_{ij})$. Then
$\nabla_A g$ is an $ m \times n $ matrix with $(i,j)$th element $\mbox{$\partial g$}/\mbox{$\partial a_{ij}$}$.
\item
Suppose that $G : \mathbb{R}^{m\times n} \to \mathbb{R}^{q\times r}$ is a matrix valued function of a matrix valued argument $A$. Then, $$\frac{dG}{dA}  \coloneqq \frac{d\mathrm{vec}(G)}{d\mathrm{vec}(A)}$$
is an $mq \times nr$ matrix with $(i,j)$th element $\mbox{$\partial \mathrm{vec}(G)_i$}/\mbox{$\partial \mathrm{vec}(A)_j$}$.
\end{enumerate}
\end{definition}
\begin{remark}
If $g$ is a scalar function of a vector valued argument $x$, then Part (ii) (with $m=1$) implies that $dg/dx$ is a row vector. Hence, $\nabla_X g = (dg/dx)^\top$.
\end{remark}
\begin{remark}
Part (iv), with $r = 1$, covers the case of the derivative of a vector valued function with respect to a matrix valued argument.
\end{remark}
Let $0_{m\times n}$ be the $m\times n$ matrix of zeros; let $\otimes$ be the
Kronecker product and $\odot$ the Hadmard (elementwise) product, both of which can be applied to two matrices
of the same dimensions. For a $n \times n $ symmetric matrix $A$,
$\mathrm{vech}(A)$ is the column vector of length $n(n + 1)/2$ obtained by vectorizing the lower triangular and diagonal parts
 of $A$;
the  $L_n$ elimination matrix is defined by $\mathrm{vech}(A)=:L_k \mathrm{vec}(A)$; and the duplication matrix $D_n$ is defined by $\mathrm{vec}(A)=:D_k\mathrm{vech}(A)$;
see \cite{magnus1980elimination} for further properties of the elimination and duplication matrices.
We also write $K_{r,s}$ for the commutation matrix of dimensions $rs\times rs$, which for an $r\times s$ matrix $Z$ satisfies $$K_{r,s}\mathrm{vec}(Z)=\mathrm{vec}(Z^\top).$$

\section{Details on the sparsity of the precision matrix \label{app:SparsityPrecisionMatrix} of the dynamic factors}
Write $\Sigma_1$ for the covariance matrix of the latent dynamic factors $z=(z_0^\top,\dots, z_T^\top)^\top$ and let $\Omega_1 = \Sigma^{-1}_1$ be the corresponding precision matrix. Denote by $C_1$ the lower triangular Cholesky factor of $\Omega_1$, i.e. $\Omega_1 = C_1C^\top_1$. Section \ref{subsec:structure_variational_posterior} assumes that
$C_1$ takes the form
\begin{align}
C_1 &= \begin{bmatrix}
C_{00}        & 0              & 0                  & \ldots        & 0                    & 0                    \\
C_{10}        & C_{11}     & 0                  & \ldots        & 0                    & 0                    \\
0                 & C_{21}      & C_{22}        & \ldots        & 0                     & 0                   \\
\vdots         & \vdots       & \vdots         & \ddots       & \vdots            & \vdots            \\
0                 & 0                & 0                 & \ldots        & C_{T-1,T-1}   & 0                   \\
0                 &  0               & \ldots          & \ldots        & C_{T,T-1}      & C_{TT}
\end{bmatrix},
\end{align}
where the blocks in this block partitioned matrix follow the blocks of $z$;
the corresponding precision matrix takes the form
\begin{align}
\Omega_1 &= \begin{bmatrix}
\Omega_{00} & \Omega_{10}^\top  & 0 & \ldots & 0 & 0  \\
\Omega_{10} & \Omega_{11} & \Omega_{21}^\top  & \ldots & 0 & 0  \\
0 & \Omega_{21} & \Omega_{22}  & \ldots & 0 & 0 \\
\vdots & \vdots & \vdots & \ddots & \ldots & \vdots & \\
0 & 0 & 0  &  \ldots  & \Omega_{T-1,T-1} & \Omega_{T,T-1}^\top \\
0 &  0 & 0  & \ldots & \Omega_{T,T-1} & \Omega_{TT} \\
\end{bmatrix}.
\end{align}

\section{Derivations\label{app:AppendixA}}
\label{sec:Gradient_expressions}
\subsection{Gradients of the variational approximation\label{app:GradientVariationalApprox}}
This section proves Lemmas \ref{lem:standard_gradient} and \ref{lem:Roeder_gradient}, which contain the gradient with respect to the variational parameters when applying the reparameterization trick as outlined in Section \ref{subsec:structure_variational_posterior}. The following result about the $\mathrm{vec}$ and Kronecker product is useful.  For conformably dimensioned matrices $A$, $B$ and $C$,
\begin{align*}
  \mathrm{vec}(ABC) & =(C^\top \otimes A)\mathrm{vec}(B).
\end{align*}

Using the notation in Section \ref{sec:Gaussian_Variational_high_dim_state} and Appendix \ref{app:grad_expressions_lemmas},
$\theta\sim q_\lambda(\theta)$, and its generative form is given by $\theta=W\mu+WC^{-\top}\omega+Ze$; thus,
\begin{align}
 \log q_\lambda(\theta) = & -\frac{p(T+1)+P}{2}\log 2\pi-\frac{1}{2}\log |W\Sigma W^\top+Z^2|  \nonumber \\
 & -\frac{1}{2}(WC^{-\top}\omega+Ze)^\top (W\Sigma W^\top+Z^2)^{-1}(WC^{-\top}\omega+Ze). \label{logq}
\end{align}
\begin{proof}[Proof of Lemma \ref{lem:standard_gradient}]
Proof of Part (i). Since \eqref{logq} does not depend on $\mu$,
\begin{align*}
 \nabla_\mu {\cal L}(\lambda) = & \nabla_\mu E_{ ( \omega, \epsilon )}(\log h(W\mu+WC^{-\top}\omega+Ze)) \\
 = & E_{ ( \omega, \epsilon )}(W^\top \nabla_\theta \log h(W\mu+WC^{-\top}\omega+Ze)).
\end{align*}

Proof of Part (ii). For the parametrization outlined in Section \ref{subsec:OngEtAl_review}, we use the derivations in \citet{ong+ns17},
\begin{align}
 \nabla_{\mbox{vec$(B)$}} E_{ ( \omega, \epsilon )}\left(\frac{1}{2}\log |BB^\top+D^2|\right)= &
\mathrm{vec}((BB^\top+D^2)^{-1} B),  \label{gradexp1}
\end{align}
{\small
\begin{align}
\nabla_B E_{ ( \omega, \epsilon )}\Bigl(-\frac{1}{2}\mbox{tr}((B\omega+\delta\odot\kappa)^\top & (BB^\top+D^2)^{-1}(B\omega+\delta\odot\kappa))\Bigr) =
 E_{ ( \omega, \epsilon )}(-(BB^\top+D^2)^{-1}(B\omega+\delta\odot\kappa)\omega^\top \nonumber \\
  & +(BB^\top+D^2)^{-1}(B\omega+\delta\odot\kappa)(B\delta+\delta\odot\kappa)^\top (BB^\top+D^2)^{-1}B), \label{gradexp2}
\end{align}}
\begin{align}
  \nabla_\delta {\cal L}(\lambda) = & E_{ ( \omega, \epsilon )}(\mbox{diag}(\nabla_\theta h(\mu+B\omega+\delta\odot\kappa)\kappa^\top+(BB^\top+D^2)^{-1}(B\omega+\delta\odot\kappa)\kappa^\top).  \label{gradd}
\end{align}
In deriving an expression for $\nabla_{\mathrm{vec}(B)} {\cal L}(\lambda)$, it is helpful to have an explicit expression for
$dW/dB$.  We can write $W=W_1+W_2$, with $\mathrm{vec}(W)=\mathrm{vec}(W_1)+\mathrm{vec}(W_2)$; and
$$W_1=\left[\begin{array}{cc} I_{T+1}\otimes B & 0_{p(T+1)\times P} \\ 0_{P\times q(T+1)} & 0_{P\times P} \end{array}\right],\;\;\;\; W_2=\left[\begin{array}{cc}
0_{p(T+1)\times q(T+1)} & 0_{p(T+1)\times P} \\ 0_{P\times q(T+1)} & I_P \end{array}\right].$$
Using Theorem 1 of \citet{caswell+v16},
$\mathrm{vec}(W_1)=(Q_1^\top\otimes P_1)\mathrm{vec}(I_{T+1}\otimes B)$;
with
$$P_1=\left[\begin{array}{c} I_{p(T+1)} \\ 0_{P\times p(T+1)} \end{array}\right],\;\;\;\;
Q_1=\left[\begin{array}{cc} I_{q(T+1)} & 0_{q(T+1)\times P} \end{array}\right];$$
$\mathrm{vec}(W_2)$ can be written similarly, but its expression is unnecessary  since
$W_2$ does not depend on $B$.  Using standard results on the differentiation of Kronecker products,
\begin{align}
  \frac{dW}{dB} = & (Q_1^\top\otimes P_1) \frac{d(I_{T+1}\otimes B)}{dB},\\
 = & (Q_1^\top\otimes P_1)\left[\left\{(I_{T+1}\otimes K_{q,(T+1)})(\mbox{$\mathrm{vec}(I_{T+1})$}\otimes I_q)\right\}\otimes I_p\right].
\label{dvecWdvecB}
\end{align}
Then,
\begin{align*}
\nabla_{\mathrm{vec}(B)} {\cal L}(\lambda) = & T_{1B}+T_{2B}+T_{3B},
\end{align*}
where
\begin{align}
  T_{1B} = & E_{ ( \omega, \epsilon )}(\nabla_{\mathrm{vec}(B)} \log h(W\mu+W C^{-\top} \omega+Ze)), \nonumber \\
 & = E_{ ( \omega, \epsilon )}\left(\left\{\frac{dW}{dB}\right\}^\top((\mu+C^{-\top}\omega)\otimes I_{p(T+1)+P}) \nabla_\theta \log h(W\mu+W C^{-\top}\omega+Ze)\right); \label{T1B}
\end{align}
\begin{align}
  T_{2B} = & \left\{\frac{dW}{dB}\right\}^\top \left\{\frac{dWC^{-\top}}{dW}\right\}^\top \nabla_{\mathrm{vec}(WC^{-\top})} \left\{\frac{1}{2}\log |WC^{-T}C^{-1}W^\top + Z^2|\right\} \nonumber \\
  = & \left\{\frac{dW}{dB}\right\}^\top (C^{-\top}\otimes I_{p(T+1)+P}) \mathrm{vec}((WC^{-\top}C^{-1}W^\top+Z^2)^{-1}WC^{-\top}) \nonumber \\
 = & \left\{\frac{dW}{dB}\right\}^\top
\mathrm{vec}((WC^{-\top}C^{-1}W^\top+Z^2)^{-1}WC^{-\top} C^{-1}), \label{T2B}
\end{align}
using (\ref{gradexp1}); and using (\ref{gradexp2})
{\small
\begin{align}
  T_{3B} = & \nabla_{\mathrm{vec}(B)}E_{ ( \omega, \epsilon )}\left(\frac{1}{2}(WC^{-\top}\omega+Ze)^\top (W\Sigma W^\top+Z^2)^{-1}(WC^{-\top}\omega+Ze) \right) \nonumber \\
 = &  \left\{\frac{dW}{dB}\right\}^\top  \left\{\frac{dWC^{-\top}}{dW}\right\}^\top
\nabla_{\mathrm{vec}(WC^{-\top})}E_f\left(\frac{1}{2}(WC^{-\top}\omega+Ze)^\top (W\Sigma W^\top+Z^2)^{-1} \right.  \nonumber \\
~ & \left. (WC^{-\top}\omega+Ze) \right) \nonumber \\
 = & \left\{\frac{dW}{dB}\right\}^\top (C^{-\top}\otimes I_{p(T+1)+P}) \mathrm{vec}\left(E_f\left\{(W\Sigma W^\top+Z^2)^{-1}(WC^{-\top}\omega+Ze)\omega^\top \right.\right. \nonumber \\
 - & \left. \left. (W\Sigma W^\top+Z^2)^{-1}(WC^{-\top}\omega+Ze)(WC^{-\top}\omega+Ze)^\top (W\Sigma W^\top+Z^2)^{-1}WC^{-\top}\right\} \right). \label{T3B}
\end{align}}
Equation~(\ref{gradBss}) is obtained by combining (\ref{T1B})-- (\ref{T3B}).

Proof of Part (iii). The derivation of the gradient is identical to that of (\ref{gradd}), giving (\ref{graddss}) directly.

Proof of Part (iv). Writing
 \begin{align*}
\nabla_{\mathrm{vec}(C)} {\cal L}(\lambda) = & T_{1C}+T_{2C}+T_{3C},
\end{align*}
where
{\small
\begin{align}
 T_{1C} = & \nabla_{\mathrm{vec}(C)}E_{ ( \omega, \epsilon )}( \log h(W\mu+WC^{-\top}\omega+Ze)) \nonumber \\
 = &  \left\{\frac{dC^{-1}}{dC}\right\}^\top  \left\{\frac{dC^{-\top}}{dC^{-1}}\right\}^\top
  \left\{\frac{dWC^{-\top}\omega}{dWC^{-\top}}\right\}^\top E_f(\nabla_\theta \log  h(W\mu+WC^{-\top}\omega+Ze)) \nonumber \\
 = & E_{ ( \omega, \epsilon )}\left(-(C^{-1}\otimes C^{-\top})K_{q(T+1)+P,q(T+1)+P}(I_{q(T+1)+P}\otimes W^\top)(\omega\otimes I_{q(T+1)+P}) \right. \nonumber \\
 ~ & \left. \nabla_\theta \log  h(W\mu+WC^{-\top}\omega+Ze)\right) \nonumber \\
 = & -E_{ ( \omega, \epsilon )}(\mathrm{vec}(C^{-\top}\omega \nabla_\theta  \log h(W\mu+WC^{-\top}\omega+Ze)^\top W C^{-\top})); \label{T1C}
\end{align}}
\begin{align}
 T_{2C} = & \nabla_{\mathrm{vec}(C)} \frac{1}{2} \log |W C^{-\top} C^{-1}W^\top + Z^2| \nonumber \\
 = & \left\{\frac{dC^{-1}}{dC}\right\}^\top  \left\{\frac{dC^{-\top}}{dC^{-1}}\right\}^\top \left\{\frac{dWC^{-\top}}{dC^{-\top}}\right\}^\top \nabla_{\mathrm{vec}(WC^{-\top})}  \frac{1}{2}\log |WC^{-\top}C^{-1}W^\top+Z^2| \nonumber \\
 = & -(C^{-1}\otimes C^{-\top})K_{q(T+1)+P,q(T+1)+P}^\top (I_{q(T+1)+P}\otimes W^\top )\mathrm{vec}((WC^{-\top}C^{-1} W^\top +Z^2)^{-1}WC^{-\top} \nonumber \\
 = & -(C^{-1}\otimes C^{-\top})K_{q(T+1)+P,q(T+1)+P}^\top \mathrm{vec}(W^\top (WC^{-\top}C^{-1}W^\top+Z^2)^{-1}WC^{-\top}) \nonumber \\
 = & -(C^{-1}\otimes C^{-\top})\mathrm{vec}(C^{-1}W^\top (WC^{-\top}C^{-1}W^\top + Z^2)^{-1} W) \nonumber \\
 = & -\mathrm{vec}(C^{-\top}C^{-1}W^\top (WC^{-\top}C^{-1}W^\top+Z^2)^{-1}WC^{-\top}); \label{T2C}
\end{align}
and $T_{3C}=E_{(\omega, \epsilon)}(R_{3C})$, where
{\small
\begin{align}
 R_{3C} = & \nabla_{\mathrm{vec}(C)} \frac{1}{2}(WC^{-\top}\omega+Ze)^\top (WC^{-T}C^{-1}W^\top+Z^2)^{-1}(WC^{-\top}\omega+Ze) \nonumber \\
 = & \left\{\frac{dC^{-1}}{dC}\right\}^\top  \left\{\frac{dC^{-\top}}{dC^{-1}}\right\}^\top \left\{-\frac{dWC^{-\top}}{dC^{-\top}}\right\}^\top  \nonumber \\
 & \nabla_{\mathrm{vec}(WC^{-\top})} \frac{1}{2} (WC^{-\top}\omega+Ze)^\top (WC^{-T}C^{-1}W^\top+Z^2)^{-1}(WC^{-\top}\omega+Ze) \nonumber \\
 = & -(C^{-1}\otimes C^{-\top})K_{q(T+1)+P,q(T+1)+P}^\top (I_{q(T+1)+P}\otimes W^\top) \mathrm{vec}\left((WC^{-\top}C^{-1}W^\top+Z^2)^{-1} \right. \nonumber \\ ~ &  \left. (WC^{-\top}+D\epsilon)\omega^\top -(WC^{-\top}C^{-1}W^\top+Z^2)^{-1}(WC^{-\top}\omega + Ze)(WC^{-\top}\omega+Z e)^\top \right. \nonumber \\
~ & \left. (WC^{-\top}C^{-1}W^\top+Z^2)^{-1}WC^{-\top}\right) \nonumber \\
 = & -(C^{-1}\otimes C^{-\top})K_{q(T+1)+P,q(T+1)+P} \mathrm{vec}\left(W^\top(W\Sigma^{-1}W^\top+Z^2)^{-1}(WC^{-\top}\omega+Ze)\omega^\top\right. \nonumber \\
 & \left.-W^\top(W\Sigma^{-1}W^\top+Z^2)^{-1}(WC^{-\top}\omega+Ze)(WC^{-\top}\omega+Ze)^\top (WC^{-\top}C^{-1} W^\top+Z^2)^{-1}WC^{-\top}\right) \nonumber \\
 = & -(C^{-1}\otimes C^{-\top})\mathrm{vec}\left(\omega (WC^{-\top}\omega+Ze)^\top(W\Sigma W^\top+Z^2)^{-1}W \right. \nonumber \\
 & \left.-C^{-1}W^\top (W\Sigma W^\top+Z^2)^{-1}(WC^{-\top}\omega+Ze)(WC^{-\top}\omega+Ze)^\top (W\Sigma W^\top+Z^2)^{-1}WC^{-\top}\right) \nonumber \\
 = & \mathrm{vec}\left(-C^{-\top}\omega (WC^{-\top}\omega+Ze)^\top (W\Sigma W^\top+Z^2)^{-1}WC^{-\top}\right. \nonumber \\
 & \left.+C^{-\top}C^{-1}W^\top (WC^{-\top}C^{-1}W^\top+Z^2)^{-1}(WC^{-\top}\omega+Ze)(WC^{-\top}\omega+Ze)^\top \right.  \nonumber \\ ~ &  \left. (W\Sigma W^\top+Z^2)^{-1}WC^{-\top}\right). \label{T3C}
\end{align}}
Equation (\ref{gradCss}) is obtained by combining (\ref{T1C})-- (\ref{T3C}).
\end{proof}
\begin{proof}[Proof of Lemma \ref{lem:Roeder_gradient}]
Proof of Part (i). The expectation of the second term in \eqref{sgradmuss} is zero and thus the expression becomes \eqref{gradmuss}.

Proof of Part (ii). The term $T_{2B}$ in \eqref{lem:standard_gradient} cancels the second term in $T_{3B}$ in \eqref{lem:standard_gradient} after taking expectations, leaving \eqref{sgradBss}.

Proof of Part (iii). Cancellation of the second and fourth terms in \eqref{graddss} after taking expectations, gives \eqref{sgraddss}.

Proof of Part (iv). Cancellation of the second and fourth terms in \eqref{gradCss} after taking expectations, gives \eqref{sgradCss}.
\end{proof}

\subsection{Gradient of the log-posterior for the collared-dove data model\label{app:GradientLogPosterior}}

Let $p(x|a,b)$ denote a probability density with argument $x$ and
parameters $a,b$. In what follows, these density functions are (abbreviations within parenthesis) the Inverse-Gamma (IG), the normal ($\mathcal{N}$) and the Poisson (Poisson). The log-posterior of (\ref{eq:PosteriorDistribution}),
with $\phi_{o}=\log\sigma_{o}^{2}$ for symbols $o=\epsilon,\eta,\psi,\alpha$,
is
\begin{align}
\log p(\theta|y) & =\mathrm{const}+\phi_{\epsilon}+\phi_{\eta}+\phi_{\psi}+\phi_{\alpha}\nonumber \\
 & +\log\mathrm{IG}(\exp(\phi_{\epsilon})|q_{\epsilon},r_{\epsilon})+\log\mathrm{IG}(\exp(\phi_{\eta})|q_{\eta},r_{\eta})+\log\mathrm{IG}(\exp(\phi_{\psi})|q_{\psi},r_{\psi})\nonumber \\
 & +\log\mathrm{IG}(\exp(\phi_{\alpha})|q_{\alpha},r_{\alpha})+\log \mathcal{N}(\alpha|0,\exp(\phi_{\alpha})R_{\alpha})+\log \mathcal{N}(\psi|\Phi\alpha,\exp(\phi_{\psi})I_{p})\nonumber \\
 & +\log \mathcal{N}(u_{0}|0,10I_{p})+\sum_{t=1}^{T}\log \mathcal{N}(u_{t}|G_{t-1}\psi+u_{t-1},\exp(\phi_{\eta})I_{p})\nonumber \\
 & +\sum_{t=1}^{T}\log \mathcal{N}(v_{t}|u_{t},\exp(\phi_{\epsilon})I_{p})+\sum_{t=1}^{T}\sum_{i=1}^{p}\log\mathrm{Poisson}(y_{it}|N_{it}\exp(v_{it})).\label{eq:log_posterior}
\end{align}
With $q=\mathrm{shape}$ and $r=\mathrm{scale}$,
\begin{align*}
\log\mathrm{IG}(x|q,r) & =\mathrm{const}-(q+1)\log(x)-r/x, \\
\frac{d}{dx}\log\mathrm{IG}(x|q,r) & =-(q+1)/x+r/x^{2};
\end{align*}
hence
\begin{align*}
\frac{d}{d\phi}\log\mathrm{IG}(\exp(\phi)|q,r) & =\left(-(q+1)/\exp(\phi)+r/\exp(2\phi)\right)\exp(\phi)\\
 & =-(q+1)+r/\exp(-\phi).
\end{align*}
Let $x,a$ be $p$ dimensional vectors, $b$ a scalar and $I_p$ a $p \times p $ identity matrix; then,
\[
\log \mathcal{N}(x|a,bI_{p})=\mathrm{const}-\frac{p}{2}\log(b)-\frac{1}{2b}(x-a)^{\top}(x-a);
\]
\begin{align*}
\frac{d}{dx}\log \mathcal{N}(x|a,bI) & =-\frac{1}{b}(x-a);\\
\frac{d}{da}\log \mathcal{N}(x|a,bI) & =\frac{1}{b}(x-a);\\
\frac{d}{db}\log \mathcal{N}(x|a,bI) & =-\frac{p}{2b}+\frac{1}{2b^2}(x-a)^{\top}(x-a).
\end{align*}
For
\begin{align*}
\log\mathrm{Poisson}(k|Na) & =\mathrm{const}+k\log(Na)-Na ;
\end{align*}
\begin{align*}
\frac{d}{da}\log\mathrm{Poisson}(k|Na) & =k/a-N.
\end{align*}
It is straightforward to compute the gradient
of \eqref{eq:log_posterior} using these derivatives and the chain rule.

\subsection{Log-posterior for the Wishart process model\label{app:LogPosteriorPhilipovGlickman}}
To compute the Jacobian term of the transformations in Section \ref{subsec:ModelPhilipovGlickman}, note that from standard results about the derivative of a covariance matrix with respect to its Cholesky factor
\begin{align*}
  \frac{d \mathrm{vech}(\Sigma_t)}{d \mathrm{vech}(C_t)} & =L_k(I_{k^2}+K_{k,k})(C_t\otimes I_k)L_k^\top,
\end{align*}
where $L_k$ and $K_{k,k}$ denote the elimination matrix and the commutation matrix defined in Section \ref{supp:notation};
similarly,
\begin{align*}
  \frac{d \mathrm{vech}(A)}{d \mathrm{vech}(H)} & =L_k(I_{k^2}+K_{k,k})(H\otimes I_k)L_k^\top.
\end{align*}
We also have
$$\frac{d (\nu-k)}{d\nu'}=\nu-k,\;\;\;\;\frac{d d}{d d'}=d(1-d),\;\;\;\; \frac{d C_{t,ii}}{d C'_{t,ii}}=C_{t,ii}\;\;\text{ and }\;\;\frac{d H_{ii}}{d H'_{ii}}=H_{ii};$$
hence,
\begin{align*}
  p(\theta|y) \propto & |L_k (I_{k^2}+K_{k,k})(H\otimes I_k)L_k^\top| \times \left\{\prod_{t=1}^T |L_k(I_{k^2}+K_{k,k})(C_t\otimes I_k)L_k^\top| \right\} \times (\nu-k) \\
 & \times d(1-d)\times \left\{\prod_i H_{ii}\right\}\left\{\prod_{t=1}^T \prod_{i=1}^k C_{t,ii}\right\} \times p(A,d,\nu-k)\left\{\prod_{t=1}^T p(\Sigma_t|\nu,S_{t-1},d) p(y_t|\Sigma_t)\right\}.
\end{align*}

\subsection{Gradient of the log-posterior for the Wishart process model\label{app:GradientLogPosteriorPhilipovGlickman}}
Let $h(\theta) \coloneqq p(\theta|y)$, with $p(\theta|y)$ in (\ref{eq:PosteriorDistribution_PhilipovGlickman}). First, for $t=1,\dots, T-1,$
\begin{align}
\nabla_{\mathrm{vech}(C'_t)} \log h(\theta)  = &
\nabla_{\mathrm{vech}(C'_t)} \log |L_k(I_{k^2}+K_{k,k})(C_t\otimes I_k)L_k^\top|+\sum_{i=1}^k \nabla_{\mathrm{vech}(C'_t)} \log C_{t,ii} \nonumber \\
 & +\nabla_{\mathrm{vech}(C'_t)} \log p(\Sigma_t|\nu,S_{t-1},d)+\nabla_{\mathrm{vech}(C'_t)} \log p(\Sigma_{t+1}|\nu,S_t,d) \nonumber \\
& +\nabla_{\mathrm{vech}(C'_t)} \log p(y_t|\Sigma_t) \nonumber \\
= & T_{t1}+T_{t2}+T_{t3}+T_{t4}+T_{t5}; \label{Ctterms}
\end{align}
the expression is the same for $t=T$, with the fourth term $T_{t4}$ omitted.

We now define $T_{t1}, \dots , T_{t1}$ in (\ref{Ctterms}).
\begin{align*}
  T_{t1} = & \nabla_{\mathrm{vech}(C'_t)} \log |L_k(I_{k^2}+K_{k,k})(C_t\otimes I_k)L_k^\top| \\
 = & \mathrm{vech}(D(C_t))\odot L_k (I_k\otimes \{(I_k\otimes \mathrm{vec}(I_k)^\top)(K_{k,k}\otimes I_k)\}) \\
 & \times (L_k^\top\otimes (I_{k^2}+K_{k,k})L_k^\top)
\mathrm{vec}(\{L_k (I_{k^2}+K_{k,k})(C_t\otimes I_k)L_k^\top\}^{-\top});
\end{align*}
for a square matrix $A$, $D(A)$ is a square matrix of the same dimension as $A$, and
having all entries 1, except for the diagonal entries which are equal to the corresponding diagonal entries of $A$.
The derivation of the above expression follows from the chain rule; the standard results
\begin{align*}
\nabla_{\mathrm{vec}(A)} \log |A|=\mathrm{vec}(A^{-\top}),\;\;\;\; \frac{d AXB}{d X}= & B^\top\otimes A,\;\;\;\;\frac{d \mathrm{vech}(C_t)}{d\mathrm{vec}(C_t)}=L_k^\top;
\end{align*}
the observation that
$d\mathrm{vech}(C_t)/d\mathrm{vech}(C'_t)$ is the diagonal matrix with diagonal entries $\mathrm{vech}(D(C_t));$ and
that by Theorem~11 of \citet{magnus+n85},
\begin{align*}
  \frac{d C_t\otimes I_k}{d C_t}=(I_k\otimes \left\{(K_{k,k}\otimes I_k)(I_k\otimes \mathrm{vec}I_k))\right\}.
\end{align*}
These results, together with the identities
\begin{align*}
  \frac{d A^{-1}}{d A} & = - (A^{-\top}\otimes A^{-1}) \;\;\;\;\text{ and }\;\;\;\;\;\frac{d \mathrm{tr}(AB)}{d B}=\mathrm{vec}(A^\top)^\top,
\end{align*}
are used repeatedly in the derivations below.
Next,
\begin{align*}
  T_{t2} = & \sum_{i=1}^k \nabla_{\mathrm{vech}(C'_t)} \log C_{t,ii} = \mathrm{vech}(I_k);
\end{align*}
\small{\begin{align*}
  T_{t3} = & \nabla_{\mathrm{vech}(C'_t)} \log p(\Sigma_t|\nu,S_{t-1},d) \\
= & \nabla_{\mathrm{vech}(C'_t)} \left\{-\frac{\nu+k+1}{2}\log |\Sigma_t|-\frac{1}{2}\mathrm{tr}(S_{t-1}^{-1}\Sigma_t^{-1})\right\} \\
= & \mathrm{vech}(D(C_t))\odot \left\{L_k(I_{k^2}+K_{k,k})(C_t\otimes I_k)L_k^\top\right\}^\top D_k^\top \left\{-\frac{\nu+k+1}{2}\mathrm{vec}(\Sigma_t^{-1})
+\frac{1}{2}(\Sigma_t^{-1}\otimes \Sigma_t^{-1})\mathrm{vec}(S_{t-1}^{-1})\right\} \\
= & \mathrm{vech}(D(C_t))\odot \left\{L_k(C_t^\top\otimes I_k)(I_{k^2}+K_{k,k})L_k^\top\right\} D_k^\top \left\{-\frac{\nu+k+1}{2}\mathrm{vec}(\Sigma_t^{-1})+\frac{1}{2}(\Sigma_t^{-1}\otimes \Sigma_t^{-1})\mathrm{vec}(S_{t-1}^{-1})\right\},
\end{align*}}
where $D_k$ is the duplication matrix in Section \ref{supp:notation}; and
\small{\begin{align*}
 T_{t4} = & \nabla_{\mathrm{vech}(C'_t)} \log p(\Sigma_{t+1}|\nu,S_t,d) \\
 = & \nabla_{\mathrm{vech}(C'_t)} \left\{ -\frac{\nu}{2}\log |S_t|-\frac{1}{2}\mathrm{tr}(S_t^{-1}\Sigma_{t+1}^{-1})\right\} \\
 = & \mathrm{vech}(D(C_t))\odot \{L_k(C_t^\top\otimes I_k)(I_{k^2}+K_{k,k})L_k^\top\}D_k^\top\left\{\frac{d S_t}{d\Sigma_t}\right\}^\top\left\{-\frac{\nu}{2}\mathrm{vec}(S_t^{-1})+\frac{1}{2}(S_t^{-\top}\otimes S_t^{-1})\mathrm{vec}(\Sigma_{t+1}^{-1})\right\}.
\end{align*}}
Above,
\begin{align*}
  \frac{d S_t}{d\Sigma_t} = & \frac{1}{\nu}(H\otimes H)\left\{\frac{d \Sigma_t^{-d}}{d\Sigma_t}\right\};
\end{align*}
and define,
\begin{align}
  \Sigma_t^{-d} := & P_t\Lambda_t^{-d} P_t^\top, \label{powerdef}
\end{align}
where $\Sigma_t=P_t\Lambda_t P_t^\top$ is the eigen-decomposition of $\Sigma_t$ with
$P_t$ the orthonormal matrix of the eigenvectors and $\Lambda_t$ is
the diagonal matrix of the eigenvalues; we denote the $j$th column of $P_t$
(the $j$th eigenvector) as $p_{tj}$, and the $j$th diagonal element of $\Lambda_t$ (the $j$th eigenvalue) as $\lambda_{tj}$, where $\lambda_{t1}>\dots >\lambda_{tk}>0$
and $\Lambda_t^{-d}$ is the diagonal matrix with $j$th diagonal entry $\lambda_{tj}^{-d}$.

Writing,
$\Sigma_t^{-d}=\sum_{i=1}^k (\lambda_{ti}^{-d}I_k) p_{ti}p_{ti}^\top$, and  using the product rule,
\begin{align*}
  \frac{d \Sigma_t^{-d}}{d \Sigma_t} = & \sum_{i=1}^k \left\{(p_{ti}p_{ti}^\top\otimes I_k)\frac{d \lambda_{ti}^{-d}I_k}{d\Sigma_t}+(I_k\otimes \lambda_{ti}^{-d}I_k)
 \frac{ d p_{ti}p_{ti}^\top}{d\Sigma_t}\right\},
\end{align*}
 where
\begin{align*}
 \frac{d\lambda_{ti}^{-d} I_k}{d \Sigma_t} = & -d \lambda_{ti}^{-d-1} \mathrm{vec}(I_k)\frac{d \lambda_{ti}}{d \Sigma_t}
 =  -d \lambda_{ti}^{-d-1}\mathrm{vec}(I_k)(p_{ti}^\top\otimes p_{ti}^\top);
\end{align*}
the last line follows from Theorem 1 of \cite{magnus85}.
Moreover,
\begin{align*}
\frac{ d p_{ti}p_{ti}^\top}{d\Sigma_t} = & \frac{d p_{ti}p_{ti}^\top}{d p_{ti}} \frac{d p_{ti}}{d\Sigma_t}
 =  \left\{p_{ti}\otimes I_k+I_k\otimes p_{ti}\right\} \times \left\{p_{ti}^\top\otimes (\lambda_{ti}I_k-\Sigma_t)^{-}\right\},
\end{align*}
where $A^-$ denotes the Moore-Penrose inverse of $A$ and using Theorem 1 in \cite{magnus85}.
Finally,
\begin{align*}
  T_{t5} = & \nabla_{\mathrm{vech}(C'_t)} \log p(y_t|\Sigma_t) \\
 = & \nabla_{\mathrm{vech}(C'_t)} \left\{-\frac{1}{2}\log |\Sigma_t|-\frac{1}{2}y_t^\top \Sigma_t^{-1} y_t \right\}  \\
 = & \mathrm{vech}(D(C_t))\odot \left\{L_k(I_{k^2}+K_{k,k})(C_t\otimes I_k)L_k^\top\right\}^\top D_k^\top \left\{-\frac{1}{2}\mathrm{vec}(\Sigma_t^{-1})+\frac{1}{2}(\Sigma_t^{-1}\otimes \Sigma_t^{-1})(y_t\otimes y_t)\right\} \\
 = & \mathrm{vech}(D(C_t))\odot \left\{L_k(C_t^\top\otimes I_k)(I_{k^2}+K_{k,k})L_k^\top\right\} D_k^\top \left\{-\frac{1}{2}\mathrm{vec}(\Sigma_t^{-1})+\frac{1}{2}(\Sigma_t^{-1}y_t \otimes \Sigma_t^{-1}y_t)\right\}.
\end{align*}
Next, consider
\begin{align*}
  \nabla_{\mathrm{vech}(H')} \log h(\theta) = & \nabla_{\mathrm{vech}(H')} \log p(A)+\nabla_{\mathrm{vech}(H')} \sum_{t=1}^T \log p(\Sigma_t|\nu,S_{t-1},d) \\
   & +\nabla_{\mathrm{vech}(H')} \sum_{i=1}^k \log H_{ii}+\nabla_{\mathrm{vech}(H')} \log |L_k(I_{k^2}+K_{k,k})(H\otimes I_k)L_k^\top| \\
 = & T_{H1}+T_{H2}+T_{H3}+T_{H4};
\end{align*}
We give expressions for $T_{H1}, \dots, T_{H4}$ below.
\small{\begin{align*}
  T_{H1}= & \nabla_{\mathrm{vech}(H')} \log p(A) \\
 = & \mathrm{vech}(D(H))\odot L_k(H^\top \otimes I_k)(I_{k^2}+K_{k,k})L_k^\top \times D_k^\top\left(-\frac{\gamma_0+k+1}{2}\mathrm{vec}(A^{-1})+\frac{1}{2}(A^{-1}\otimes A^{-1})\mathrm{vec}(Q_0^{-1})\right),
\end{align*}}
with the derivation similar to that of $T_{t3}$.
\begin{align*}
 T_{H2} = & \sum_{t=1}^T \nabla_{\mathrm{vech}(H')}\log p(\Sigma_t|\nu,S_{t-1},d) \\
 = & \sum_{t=1}^T  \left\{-\frac{\nu}{2}\nabla_{\mathrm{vech}(H')} \log |S_{t-1}|-\frac{1}{2}\nabla_{\mathrm{vech}(H')} \mathrm{tr}(S_{t-1}^{-1}\Sigma_t^{-1})\right\},
\end{align*}
with
\begin{align*}
  \nabla_{\mathrm{vech}(H')} \log |S_{t-1}| = & \nabla_{\mathrm{vech}(H')} \log |\frac{1}{\nu} H\Sigma_{t-1}^{-d} H^\top| \\
 = & 2 \nabla_{\mathrm{vech}(H')} \log |H| \\
 = & 2 \mathrm{vech}(D(H))\odot L_k \mathrm{vec}(H^{-\top});
\end{align*}
and
\begin{align*}
  \nabla_{\mathrm{vech}(H')} \mathrm{tr}(S_{t-1}^{-1}\Sigma_t^{-1}) = &
\left\{ \frac{d \mathrm{tr}(S_{t-1}^{-1}\Sigma_t^{-1})}{d S_{t-1}} \frac{d S_{t-1}}{d H} \frac{d H}{d \mathrm{vech}(H)} \frac{d \mathrm{vech}(H)}{d \mathrm{vech}(H')}\right\}^\top \\
 = & -\mathrm{vech}(D(H))\odot L_k \left\{\frac{d S_{t-1}}{d H}\right\}^\top (S_{t-1}^{-\top}\otimes S_{t-1}^{-1}) \mathrm{vec}(\Sigma_t^{-1});
\end{align*}
where
\begin{align*}
 \frac{d S_{t-1}}{d H} = & \frac{1}{\nu}\frac{d H\Sigma_{t-1}^{-d} H^\top}{d H} = \frac{1}{\nu} \frac{ d (H\Sigma_{t-1}^{-d/2}\Sigma_{t-1}^{-d/2}H^\top)}{d H\Sigma_{t-1}^{-d/2}}
 \frac{d H\Sigma_{t-1}^{-d/2}}{d H} \\
 = & \frac{1}{\nu} (I_{k^2}+K_{k,k})(H\Sigma_{t-1}^{-d/2}\otimes I_k)(\Sigma_{t-1}^{-d/2}\otimes I_k)=\frac{1}{\nu}(I_{k^2}+K_{k,k})(H\Sigma_{t-1}^{-d}\otimes I_k).
\end{align*}
\begin{align*}
  T_{H3} = & \sum_{i=1}^k \nabla_{\mathrm{vech}(H')} \log H_{ii} = \mathrm{vech}(I_k).
\end{align*}
\begin{align*}
  T_{H4} = & \nabla_{\mathrm{vech}(H')} \log |L_k(I_{k^2}+K_{k,k})(H\otimes I_k)L_k^\top| \\
 = & \mathrm{vech}(D(H))\odot L_k(I_k\otimes \{(I_k\otimes \mathrm{vec}(I_k)^\top)(K_{k,k}\otimes I_k)\}) \\
 & \times (L_k^\top\otimes (I_{k^2}+K_{k,k})L_k^\top)
\mathrm{vec}(\{L_k (I_{k^2}+K_{k,k})(H\otimes I_k)L_k^\top\}^{-\top}).
\end{align*}
by a similar derivation to that $T_{t1}$.
Next, consider the gradient for $d'$.
\begin{align*}
  \frac{d \log h(\theta)}{dd'} = & \frac{d}{dd} \log d(1-d) \frac{dd}{dd'} +\frac{d \log p(d)}{dd'} + \frac{d}{d d'}\sum_{t=1}^T \log p(\Sigma_t|\nu,S_{t-1},d)
 = & T_{d 1}+T_{d 2}+T_{d 3}.
\end{align*}
Here $T_{d 1} =1-2d$, $T_{d 2}=0$ and
\begin{align*}
  T_{d 3} = & \sum_{t=1}^T \frac{d}{d S_{t-1}} \left\{ -\frac{\nu}{2}\log |S_{t-1}|-\frac{1}{2}\mathrm{tr}(S_{t-1}^{-1}\Sigma_t^{-1})\right\}\times \frac{d S_{t-1}}{d d} \frac{d d}{d d'} \\
 = & \sum_{t=1}^T \left\{ -\frac{\nu}{2}\mathrm{vec}(S_{t-1}^{-1})+\frac{1}{2}(S_{t-1}^{-\top}\otimes S_{t-1}^{-1})\mathrm{vec}(\Sigma_t^{-1})\right\}^\top \\
   & \times \left\{-\frac{1}{\nu}(H\otimes H)\sum_{i=1}^k \log \lambda_{t-1,i}(\lambda_{t-1,i}^{-d})\mathrm{vec}(p_{t-1,i}p_{t-1,i}^\top)\right\} \times d(1-d).
\end{align*}

Finally, consider the gradient for $\nu'$,
\begin{align*}
 \frac{d}{d\nu'} \log h(\theta) = & \frac{d \log (\nu-k)}{d\nu'}+\frac{d}{d\nu'} \log p(\nu-k)+\sum_{t=1}^T \frac{d}{d\nu'} \log p(\Sigma_t|\nu,S_{t-1},d) \\
 = & T_{\nu 1}+T_{\nu 2}+T_{\nu 3},
\end{align*}
where $T_{\nu 1}=1$,
\begin{align*}
  T_{\nu 2}= & (\nu-k) \frac{d}{d\nu}\left\{ (\alpha_0-1)\log (\nu-k)-\beta_0 (\nu-k)\right\} \\
 = & (\alpha_0-1)-\beta_0 (\nu-k)
\end{align*}
and
\small{\begin{align*}
  T_{\nu 3} = & (\nu-k) \sum_{t=1}^T \frac{d}{d\nu}\log p(\Sigma_t|\nu,S_{t-1},d) \\
 = & (\nu-k) \sum_{t=1}^T \frac{d}{d\nu}\left\{ -\frac{\nu k}{2}\log 2 -\sum_{i=1}^k \log \Gamma\left(\frac{\nu+1-i}{2}\right)-\frac{\nu}{2}\log |S_{t-1}|-\frac{\nu+k+1}{2}\log |\Sigma_t|-\frac{1}{2}\mathrm{tr}(S_{t-1}^{-1}\Sigma_t^{-1})\right\} \\
 = & (\nu-k) \left\{-\frac{Tk}{2}\log 2 -\frac{T}{2}\sum_{i=1}^k \psi\left(\frac{\nu+1-i}{2}\right) \right. \\
  & \left.-\sum_{t=1}^T \left\{ -\frac{k}{2}+\frac{1}{2}\log |S_{t-1}\Sigma_t|+\frac{1}{2}\mathrm{tr}\left((H^\top)^{-1}\Sigma_{t-1}^{-d}H^{-1}\Sigma_t^{-1} \right)\right\}\right\}.
\end{align*}}

\section{Low-Rank State and Auxiliary variable (LR-SA): Including the auxiliary variable in
the low rank approximation}
\label{sec:LRSA_derivations}

The use of the LR-SA parametrization for the model of Section 5 is now described.
Following Section \ref{sec:Gaussian_Variational_high_dim_state},   the $p$ dimensional vectors $u_{t}$
and $v_{t}$ are modeled through the lower dimensional factor \begin{eqnarray*}
X_{t}^{(1)} & = & B_{1}z_{t}^{(1)}+\epsilon_{t}^{(1)},\quad\epsilon_{t}^{(1)}\sim \mathcal{N}\left(0,\left(D_{t}^{(1)}\right)^{2}\right)\\
X_{t}^{(2)} & = & B_{2}z_{t}^{(2)}+\epsilon_{t}^{(2)},\quad\epsilon_{t}^{(2)}\sim \mathcal{N}\left(0,\left(D_{t}^{(2)}\right)^{2}\right),
\end{eqnarray*}
where  $z_{t}^{(1)}$ and $z_{t}^{(2)}$ are $q$ dimensional, $q \ll p$;
$B_{1}$ and $ B_{2}$ are $p\times q$ matrices; and both $D^{(1)} _{t}$ and $D^{(1)} _{t}$ are $p \times p$ matrices.
Define,
\begin{align*}
z^{(1)}&:=\vec \left (z_{0}^{(1)},\dots,z_{T}^{(1)}\right ) ,
z^{(2)}:=\vec\left (z_{1}^{(2)},\dots,z_{T}^{(2)}\right ),\\
\zeta&:=\vec(\psi,\alpha,\phi_{\epsilon},\phi_{\eta},\phi_{\psi},\phi_{\alpha})\quad \text{and} \quad %
\rho:=\vec(z^{(1)},z^{(2)},\zeta).
\end{align*}
Let $\mu_{1}=E(z^{(1)}),\mu_{2}=E(z^{(2)})$,
$\mu_{3}=E(\zeta)$, and $\mu= \vec ( \mu_{1},\mu_{1}, \mu_{1}   ) $
Let $C_{1}^{(1)}$ and $C_{1}^{(2)}$ denote the variational parameters that model the
precision matrices of $(z^{(1)}$ and $z^{(2)})$;  $C_{1}^{(1)}$ is identical
to the matrix $C_{1}$ described in Section \ref{sec:Gaussian_Variational_high_dim_state} (which gives a band-structure for
$\Omega_{z^{(1)}}$); we take $C_{1}^{(2)}$ as block-diagonal
with $T$ blocks, where each block models the precision matrix of
$z_{t}^{(2)}$ at time $t=1,\dots,T$. We form $C_{1}$ by combining
$C_{1}^{(1)}$ and $C_{1}^{(2)}$ as a block-diagonal matrix;
however, we let it be non-zero for the part corresponding to the correlation
between $u_{t}$ and $v_{t}$ at time $t$, and zero otherwise since
$u_{i}$ and $v_{j}$ are conditionally independent for $i\neq j$.
The Cholesky factor of the precision matrix for $\rho$
is then $C=\mathrm{block}(C_{1},C_{2})$, where $C_{2}$ is specified
similarly to Section \ref{sec:Gaussian_Variational_high_dim_state}, but omitting the dependencies that include $v$
since it is now in the $z$-block and the derivations in Section \ref{sec:Gaussian_Variational_high_dim_state} assume that $z$ is independent of $\zeta$. The reparameterization trick is then applied using the transformation
\begin{eqnarray*}
\theta & = & \widetilde{W}\rho+\widetilde{Z}e=\widetilde{W}\mu+\widetilde{W}C^{-1}\omega+\widetilde{Z}e,
\end{eqnarray*}
where
\[
e=\begin{bmatrix}\epsilon\\
0_{P\times1}
\end{bmatrix},\quad\epsilon\sim \mathcal{N}\left(0,I_{p(T+1)+pT}\right),\omega=\begin{bmatrix}\omega_{1}\\
\omega_{2}\\
\omega_{3}
\end{bmatrix}\sim \mathcal{N}\left(0,I_{q(T+1)+qT+P}\right);
\]
\[
\widetilde{W}=\begin{bmatrix}I_{T+1}\otimes B_{1} & 0_{p(T+1)\times qT} & 0_{p(T+1)\times P}\\
0_{pT\times q(T+1)} & I_{T}\otimes B_{2} & 0_{pT\times P}\\
0_{P\times q(T+1)} & 0_{P\times qT} & I_{P\times P}
\end{bmatrix}\text{ and }\widetilde{Z}=\begin{bmatrix}D_{t}^{(1)} & 0_{p(T+1)\times pT} & 0_{p(T+1)\times P}\\
0_{pT\times p(T+1)} & D_{t}^{(2)} & 0_{pT\times P}\\
0_{P\times p(T+1)} & 0_{P\times pT} & 0_{P\times P}
\end{bmatrix}.
\]
The gradients for $\mu,C$ and $D$ follow immediately
from the previous derivations. However, this does not
apply to the gradient for $B$ because $z^{(2)}$ has $T$ observations
and not $T+1$. Writing $\widetilde{W}=W_{1}+W_{2}+W_{3}$ with
\[
W_{1}=\begin{bmatrix}I_{T+1}\otimes B_{1} & 0_{p(T+1)\times qT} & 0_{p(T+1)\times P}\\
0_{pT\times q(T+1)} & 0_{pT\times qT} & 0_{pT\times P}\\
0_{P\times q(T+1)} & 0_{P\times qT} & 0_{P\times P}
\end{bmatrix},W_{2}=\begin{bmatrix}0_{p(T+1)\times q(T+1)} & 0_{p(T+1)\times qT} & 0_{p(T+1)\times P}\\
0_{pT\times q(T+1)} & I_{T}\otimes B_{2} & 0_{pT\times P}\\
0_{P\times q(T+1)} & 0_{P\times qT} & 0_{P\times P}
\end{bmatrix}
\]
and
\[
W_{3}=\begin{bmatrix}0_{p(T+1)\times q(T+1)} & 0_{p(T+1)\times qT} & 0_{p(T+1)\times P}\\
0_{pT\times q(T+1)} & 0_{pT\times qT} & 0_{pT\times P}\\
0_{P\times q(T+1)} & 0_{P\times qT} & I_{P\times P}
\end{bmatrix}.
\]
Then, $d \wt W_1/dB_1 = dW_1/dB_1$ because $W_2$ and $W_3$ do not depend on $B_1$, and
\begin{eqnarray*}
 \frac{dW_1}{dB_1} & = & (Q_{1}^{(1)^{\top}}\otimes P_{1}^{(1)})\left[\left\{ (I_{T+1}\otimes K_{q(T+1)})(\mathrm{vec}(I_{T+1})\otimes I_{q})\right\} \otimes I_{p}\right],
\end{eqnarray*}
using \eqref{dvecWdvecB},
with
\[
P_{1}^{(1)}=\begin{bmatrix}I_{p(T+1)}\\
0_{pT\times p(T+1)}\\
0_{P\times p(T+1)}
\end{bmatrix}\quad\text{and }Q_{1}^{(1)}=\begin{bmatrix}I_{q(T+1)} & 0_{q(T+1)\times qT} & 0_{q(T+1)\times P}\end{bmatrix}.
\]
Similarly,  $d\widetilde{W}/dB_2 = dW_2/dB_2$ because $W_{1}$ and $W_{3}$ do not depend on $B_{2}$, with
\begin{eqnarray*}
\frac{dW_2}{dB_2} & = & (Q_{1}^{(2)^{\top}}\otimes P_{1}^{(2)})\left[\left\{ (I_{T+1}\otimes K_{q(T+1)})(\mathrm{vec}(I_{T+1})\otimes I_{q})\right\} \otimes I_{p}\right],
\end{eqnarray*}
and
\[
P_{1}^{(2)}=\begin{bmatrix}0_{p(T+1)\times pT}\\
I_{pT}\\
0_{P\times pT}
\end{bmatrix}\quad\text{and }Q_{1}^{(2)}=\begin{bmatrix}0_{qT\times q(T+1)} & I_{qT} & 0_{qT\times P}\end{bmatrix}.
\]

\section{Some further discussion of  the Wishart process model}
\label{sec:Wishart_details}
This section derives the oracle  and variational predictive densities used for evaluating the proposed
variational approach; it also augments the comparison in Section~\ref{subsec:Results_PhilipovGlickman} of the variational and
 oracle predictive densities.

\subsection{The oracle predictive density \label{subsec:OraclePredictive}}
Let $\zeta$ be the static model parameter for the model in Section \ref{subsec:ModelPhilipovGlickman},
and $\zeta^{\mathrm{true}}$ its true value in the simulation;
$\zeta$ only appears in the state equation.
The one-step ahead oracle predictive density $p(y_{T + 1}|y_{1:T}, \zeta^{\mathrm{true}})$ is obtained empirically using
simulation by averaging over the states. Using conditional independence,
\begin{align}\label{eq:OraclePredictive}
p(y_{T + 1},X_{T+1},X_T|y_{1:T}, \zeta^{\mathrm{true}}) & = p(y_{T + 1}|X_{T+1}, \zeta^{\mathrm{true}})
p(X_{T+1}|X_T,y_{1:T}, \zeta^{\mathrm{true}})p(X_T|y_{1:T}, \zeta^{\mathrm{true}});
\end{align}
For $j=1, \dots, M$, we  use \eqref{eq:OraclePredictive} and the bootstrap particle filter \citep{gordon1993novel} 
to obtain $\{X_T^{(j)}\} $ from $p(X_T|y_{1:T}, \zeta^{\mathrm{true}})$;  we then use \eqref{eq:OraclePredictive} to
 generate $X_{T+1}^{(j)}$ from $p(X_{T+1}|X_T^{(j)},y_{1:T}, \zeta^{\mathrm{true}})$; and then $Y_{T+1}^{(j)}$ from $p(y_{T + 1}|X_{T+1}^{(j)}, \zeta^{\mathrm{true}})$. 
 The predictive density is then estimated by $\{Y_{T+1}^{(j)}\}_{j=1}^M $, each having weight $1/M$. 

\subsection{The variational predictive density \label{subsec:VariationalPredictive}}
The one-step ahead variational predictive density $p(y_{T + 1}|y_{1:T})$
averages over both the states and the static model parameters using the variational posterior.
Similarly to \eqref{eq:OraclePredictive}, we write
\begin{align}\label{eq:VariationalPredictive}
p(y_{T + 1},X_{T+1},X_T, \zeta|y_{1:T}) & =
p(y_{T + 1}|X_{T+1}, \zeta)
p(X_{T+1}|X_T,y_{1:T}, \zeta)p(X_T,\zeta|y_{1:T});
\end{align}
 For $j=1, \dots, M$, we generate  $\{X_T^{(j)},\zeta{(j)}\}_{j=1}^M $ from $q(X_T,\zeta)$ --- which is the variational approximation of $p(X_T,\zeta|y_{1:T})$;
 we
 then generate $ X_{T+1}^{(j)}$ from $p(X_{T+1}|X_T^{(j)},y_{1:T}, \zeta^{(j)})$;
 and then 
$Y_{T+1}^{(j)}$ from \eqref{subsec:VariationalPredictive}.
 The predictive density is then estimated by $\{Y_{T+1}^{(j)}\}_{j=1}^M .$

\subsection{Further results \label{subsec:VariationalPredictiveEvaluationsAppendix}}
Section \ref{subsec:Results_PhilipovGlickman} assesses the out-of-sample predictive properties by comparing the variational predictive density to the oracle predictive density. For $T = 100$,
Figure~\ref{fig:OracleVsVariational} 
in the paper shows the accuracy for all five marginal one-step ahead predictive densities as well as
predictive densities, both the variational and the oracle, for four time points $T=100,101,102,103$ and all five variables. Figure \ref{fig:OracleVsVariational_bivariates_random_selection} complements these figures by showing a subset of 15 bivariate one-step ahead predictive densities for $T=101,102,103$; we have verified similar accuracies for all the other bivariate posteriors.

\begin{figure}[h!]
\centering
\includegraphics[width=0.9\columnwidth]{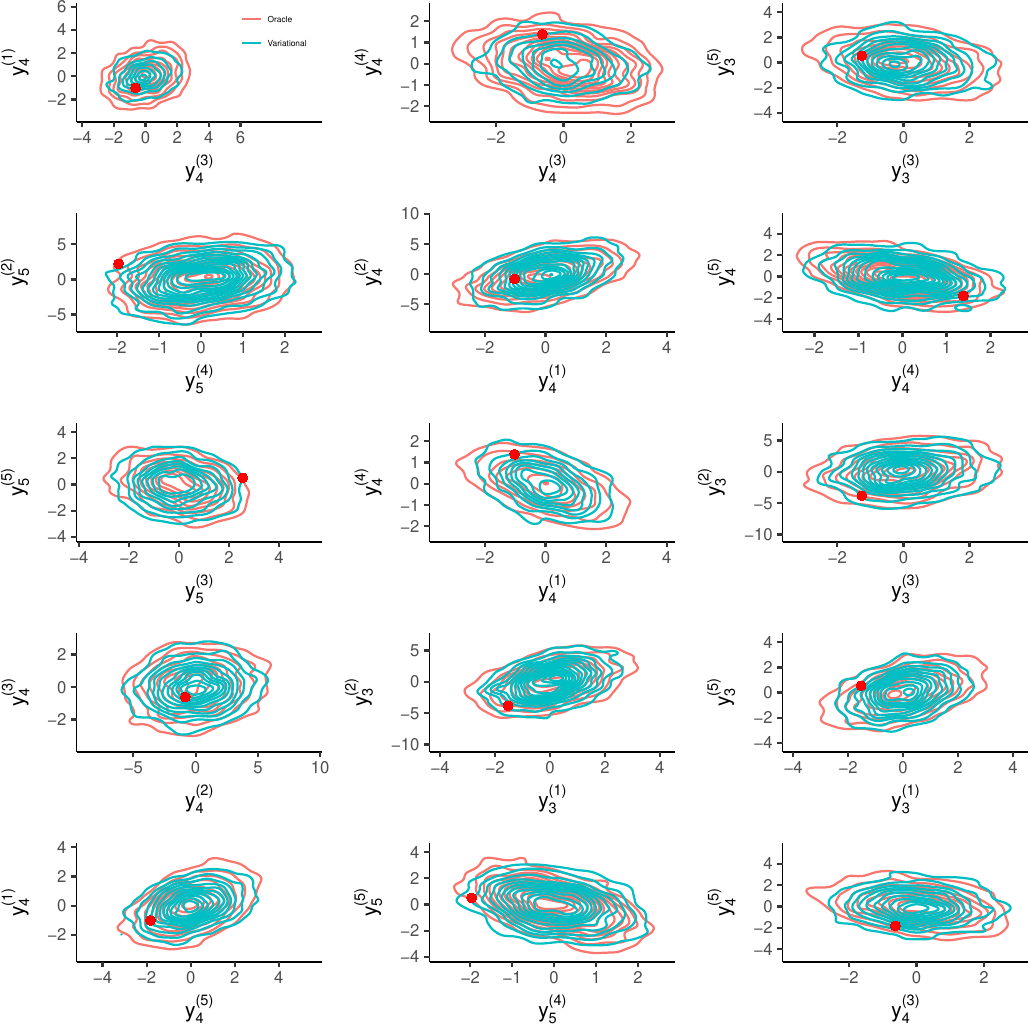}
\caption{The variational one-step ahead predictive density against the oracle one-step ahead predictive density for simulated data as in Section \ref{subsec:Results_PhilipovGlickman}. The one-step ahead bivariate predictive densities are shown together with the test observation (red dot). The labels of the axis show which $T + 1$ and pair of variables is considered.}
\label{fig:OracleVsVariational_bivariates_random_selection}
\end{figure}

\section{Details on the parsimony of the VB parametrization in the applications. \label{app:Sparsity}}
Tables \ref{tab:VB_parameterization} and \ref{tab:VB_parameterization_PhilipovGlickman} provides further details of the number of Gaussian variational parameters in the different parts of the variational structure for the spatio-temporal model and the Wishart process example, respectively.
\begin{table}[htbp]
\centering \caption{\emph{Parsimony of different VB parametrizations in the spatio-temporal model}. The table shows the number of variational parameters in the different VB parametrizations obtained by combining
either low-rank state / low-rank state and auxiliary (LR-S / LR-SA)
with either of low-dimensional state mean / high-dimensional state
mean (LD-SM / HD-SM).
The variational parameters for the
different VB parametrizations (with $T=18$) are divided into $\mu,B,D,C_{1}$
and $C_{2}$ defined in Section \ref{sec:Gaussian_Variational_high_dim_state}. The saturated Gaussian variational approximation has $8,923,199$ parameters when $\theta$ is 4223 dimensional.}

\begin{tabular}{llcccccccccccc}
\toprule
\textbf{\footnotesize{}Parametrization} &    & {\footnotesize{}$\mu$} &  & {\footnotesize{}$B$} &  & {\footnotesize{}$D$} &  & {\footnotesize{}$C_{1}$} &  & {\footnotesize{}$C_{2}$} &  & {\footnotesize{}$\mathrm{Total}$} & \tabularnewline
\cmidrule{1-1} \cmidrule{3-3} \cmidrule{5-5} \cmidrule{7-7} \cmidrule{9-9} \cmidrule{11-11} \cmidrule{13-13}
{\footnotesize{}LR-S + LD-SM} &    & {\footnotesize{}$2,190$} &  & {\footnotesize{}$438$} &  & {\footnotesize{}$2,109$} &  & {\footnotesize{}$370$} &  & {\footnotesize{}$4,447$} &  & {\footnotesize{}$9,554$} & \tabularnewline
{\footnotesize{}LR-S + HD-SM} & & {\footnotesize{}$4,223$} &  & {\footnotesize{}$438$} &  & {\footnotesize{}$2,109$} &  & {\footnotesize{}$370$} &  & {\footnotesize{}$4,447$} &  & {\footnotesize{}$11,587$} & \tabularnewline
{\footnotesize{}LR-SA + LD-SM} &   & {\footnotesize{}$264$} &  & {\footnotesize{}$876$} &  & {\footnotesize{}$4,107$} &  & {\footnotesize{}$730$} &  & {\footnotesize{}$451$} &  & {\footnotesize{}$6,428$} & \tabularnewline
{\footnotesize{}LR-SA + HD-SM} &    & {\footnotesize{}$4,223$} &  & {\footnotesize{}$876$} &  & {\footnotesize{}$4,107$} &  & {\footnotesize{}$730$} &  & {\footnotesize{}$451$} &  & {\footnotesize{}$10,387$} & \tabularnewline
\bottomrule
\end{tabular}\label{tab:VB_parameterization}
\end{table}

\begin{table}[htbp]
\centering \caption{\emph{Parsimony of different VB parametrizations in the Wishart process example}. The table shows the number of variational parameters in the different VB parametrizations obtained by the low-dimensional state mean / high-dimensional state
mean (LD-SM / HD-SM).
The variational parameters for the different VB parametrizations (with $T=100$) are divided into $\mu,B,D,C_{1}$
and $C_{2}$ defined in Section \ref{sec:Gaussian_Variational_high_dim_state} for two different examples. The first example is
 model $\mathrm{M}_1$ which is benchmarked against MCMC which has $k=5$ ($p = 15$). The second example,
 $\mathrm{M}_2$,  has $k=12$ ($p = 78$), with only the LD-SM parametrization considered. The saturated Gaussian variational approximation for  model $\mathrm{M}_1$ has $1,152,920$ variational parameters with $\theta$ ${1517}$ dimensional.
 For  $model \mathrm{M}_2$,  the corresponding number of variational parameters is $31,059,020$ with
 with $\theta$  ${7880}$ dimensional.}

\begin{tabular}{llcccccccccccc}
\toprule
\textbf{\footnotesize{}Parametrization} &    & {\footnotesize{}$\mu$} &  & {\footnotesize{}$B$} &  & {\footnotesize{}$D$} &  & {\footnotesize{}$C_{1}$} &  & {\footnotesize{}$C_{2}$} &  & {\footnotesize{}$\mathrm{Total}$} & \tabularnewline
\cmidrule{1-1} \cmidrule{3-3} \cmidrule{5-5} \cmidrule{7-7} \cmidrule{9-9} \cmidrule{11-11} \cmidrule{13-13}
{\footnotesize{}$\mathrm{M}_1$ - LD-SM} &   & {\footnotesize{}$417$} &  & {\footnotesize{}$54$} &  & {\footnotesize{}$1,500$} &  & {\footnotesize{}$1,990$} &  & {\footnotesize{}$48$} &  & {\footnotesize{}$4,009$} & \tabularnewline
{\footnotesize{}$\mathrm{M}_1$ - HD-SM} &   & {\footnotesize{}$1,517$} &  & {\footnotesize{}$54$} &  & {\footnotesize{}$1,500$} &  & {\footnotesize{}$1,990$} &  & {\footnotesize{}$48$} &  & {\footnotesize{}$5,109$} & \tabularnewline
{\footnotesize{}$\mathrm{M}_2$ - LD-SM} &   & {\footnotesize{}$480$} &  & {\footnotesize{}$306$} &  & {\footnotesize{}$7,800$} &  & {\footnotesize{}$1,990$} &  & {\footnotesize{}$237$} &  & {\footnotesize{}$10,813$} & \tabularnewline

\bottomrule
\end{tabular}\label{tab:VB_parameterization_PhilipovGlickman}
\end{table}

\section{The problem with the  \citet{philipov+g06} MCMC implementation of their stochastic volatility model \label{subsec:ProblemsMCMC_PhilipovGlickman}}
Following from Section~\ref{subsec:ModelPhilipovGlickman}, we now discuss why the MCMC in \citet{philipov+g06}, as corrected by \citet{rinnergschwenter+tw12}, is infeasible for this problem.
\citet{rinnergschwenter+tw12} shows that $A^{-1}$  cannot be 
sampled directly from a Wishart distribution in a Gibbs sampling step. We tried implementing a random-walk Metropolis-Hastings update for $A^{-1}$ using a Wishart proposal with a mean equal to the current value in the MCMC. The erroneous step also results in changes for all the full conditionals, which explains why our implementation does not achieve the sampling efficiency reported
in \citet{philipov+g06} when $k=5$.
Figure~\ref{fig:ACF_PhilipovGlickman} shows that the MCMC iterates are highly correlated,
leading to small values for the effective sample size;
 leads to unreliable highly variable kernel density estimates of posterior densities.
and is the reason we benchmark the adequacy of the variational approximation using the oracle predictive density approach in Section~\ref{subsec:Results_PhilipovGlickman}.

Therefore, this application is infeasible for this particular Metropolis-Hastings within Gibbs sampler when $k=5$; it gets even worse for $k=12$, and the main reasons the sampler fails are due to an independent Wishart proposal within Gibbs for updating $\Sigma^{-1}_t$ for $t<T$ (at $t=T$ perfect sampling from a Wishart can be applied) and the random-walk proposal within Gibbs for $A^{-1}$. It is well known that these proposals fail in a high-dimensional setting: the random-walk explores the sampling space very slowly while independent samplers get stuck, i.e. reject nearly all attempts to move the Markov chain.

We remark that other MCMC approaches for estimating this model more efficiently might be possible, but it is outside the scope of this paper to pursue this. As an example, Hamiltonian Monte Carlo on the Riemannian manifold \citep{girolami2011riemann} has proven to be effective in sampling models with $500\text{-}1,000$ parameters. However, the computational burden relative to standard MCMC is increased and, moreover, tuning the algorithm becomes more difficult.

\begin{figure}[h]
\centering
\includegraphics[width=0.6\columnwidth]{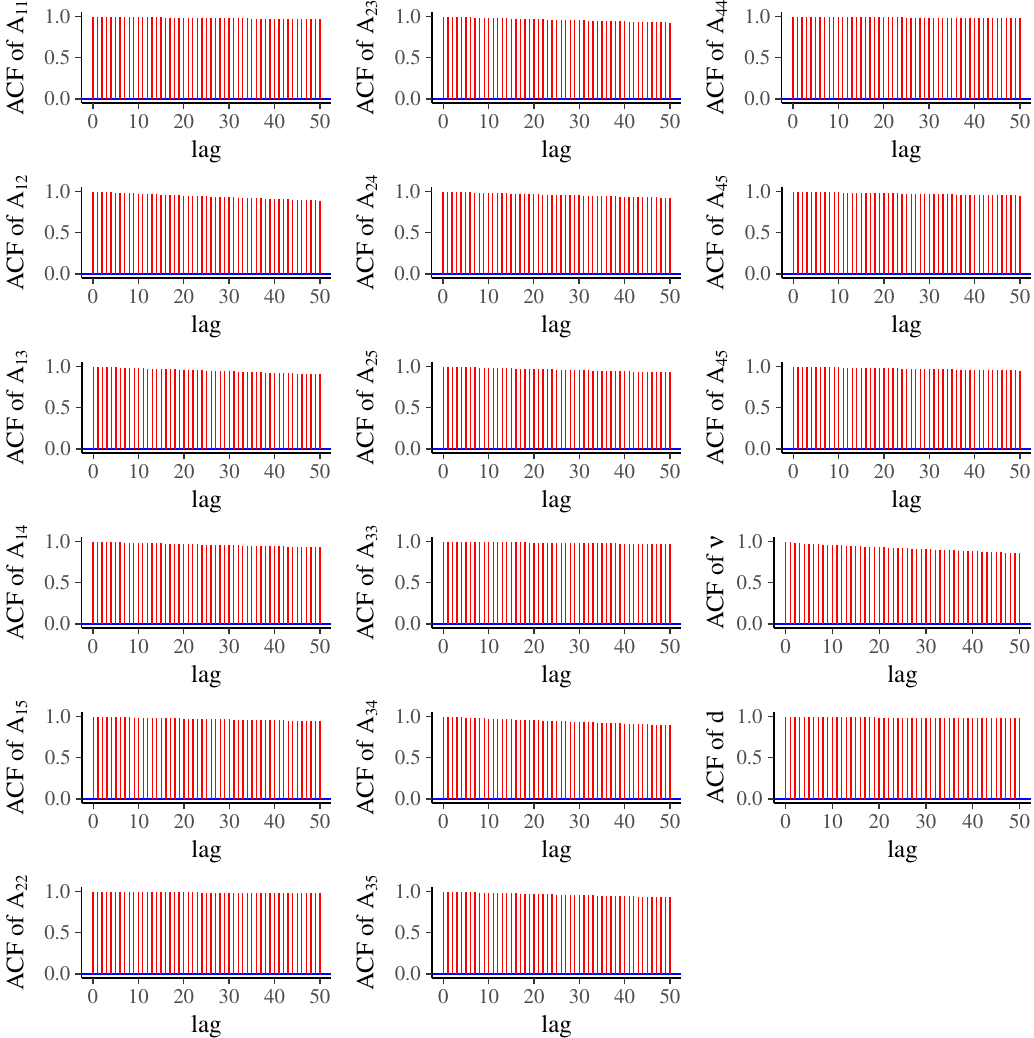}
\caption{Autocorrelation function (ACF) plots for the MCMC iterates of $A$, $\nu$ and $d$.}
\label{fig:ACF_PhilipovGlickman}
\end{figure}


\end{document}